\newtheorem{theorem}{Theorem}
\title{Achieving Equilibrium under Utility Heterogeneity: An Agent-Attention Framework for Multi-Agent Multi-Objective Reinforcement Learning}
\author{
Zhuhui~Li\textsuperscript{\rm 1},
Chunbo~Luo\textsuperscript{\rm 1},
Liming~Huang\textsuperscript{\rm 2},
Luyu~Qi\textsuperscript{\rm 3},
and~Geyong~Min\textsuperscript{\rm 1}
}
\begin{document}

\maketitle

\begin{abstract}
Multi-agent multi-objective systems (MAMOS) have emerged as powerful frameworks for modelling complex decision-making problems across various real-world domains, such as robotic exploration, autonomous traffic management, and sensor network optimisation. MAMOS offer enhanced scalability and robustness through decentralised control and more accurately reflect inherent trade-offs between conflicting objectives. In MAMOS, each agent uses utility functions that map return vectors to scalar values. Existing MAMOS optimisation methods face challenges in handling heterogeneous objective and utility function settings, where training non-stationarity is intensified due to private utility functions and the associated policies. 
In this paper, we first theoretically prove that direct access to, or structured modeling of, global utility functions is necessary for the Bayesian Nash Equilibrium under decentralised execution constraints. To access the global utility functions while preserving the decentralised execution, we propose an Agent-Attention Multi-Agent Multi-Objective Reinforcement Learning (AA-MAMORL) framework. Our approach implicitly learns a joint belief over other agents’ utility functions and their associated policies during centralised training, effectively mapping global states and utilities to each agent's policy. In execution, each agent independently selects actions based on local observations and its private utility function to approximate a BNE, without relying on inter-agent communication.
We conduct comprehensive experiments in both a custom-designed MAMO Particle environment and the standard MOMALand benchmark. The results demonstrate that the accessibility to global preferences and our proposed AA-MAMORL significantly improves performance and consistently outperforms state-of-the-art methods.

\end{abstract}

\section{Introduction}
Multi-Agent Multi-Objective Systems (MAMOSs) have been spotlighted in real-world applications, such as balancing exploration and exploitation in networked robotic systems \cite{paine2024model}, managing the trade-off between efficiency and energy consumption in autonomous traffic control \cite{shi2021multi}, and optimising the trade-off between resolution and coverage in mobile sensor monitoring tasks \cite{hayat2020multi}. In contrast to single-agent systems where the decision burden and failure risk are centralised, the MAMOS distributes both computation and control across agents. This decentralisation enhances system scalability and enables resilience and robustness to partial agent failures \cite{he2021secure}. Meanwhile, the multi-objective formulation in MAMOSs also better reflects the inherent trade-offs in real-world applications. To be specific, most real-world systems require trade-offing between multiple, often conflicting, performance metrics. Representative MO settings include energy efficiency \cite{niu2023active}, energy performance index \cite{chang2023does}, and water use efficiency \cite{mallareddy2023maximizing}, often in conjunction with advanced integrated technological paradigms such as Simultaneous Wireless Information and Power Transfer \cite{wei2021resource}, Integrated Sensing and Communication \cite{qi2022integrating}, and piezoelectric roads \cite{jiang2023research}.

Although steady progress has been made in the development of MAMOSs, several critical challenges remain, including the joint interdependencies among objectives and agents, the dynamic nature of real-world systems, and the non-differentiability of many environments \cite{wong2023deep}. The heterogeneity and diversity of reward (corresponding to objective) and utility function settings in MAMOS further pose significant challenges to the MAMO optimisation. As discussed in \cite{ruadulescu2020multi}, the utility function is defined as a mapping from the vectorised rewards to a scalar utility. The utility function in this paper is referred to as the preference, where the weighted sum based on rewards and preferences forms the most basic utility function for all agents. The decision-making problems in MAMOS can be categorised into five settings, based on the combinations of reward and utility function types.
 On the reward side, agents receive either a team reward, where all agents share the same reward vector reflecting collective performance, or an individual reward, where each agent obtains a personalised reward vector. On the utility side, agents optimise a shared team utility, pursue a social choice utility that aggregates all agents’ rewards into a global social welfare function, or optimise their own individual utility, in which each agent maintains a private function. Examples of these combinations in real-world companies are illustrated in Fig.~\ref{fig1}.
\begin{figure}[!t] 
	\centering
	\includegraphics[width=\columnwidth]{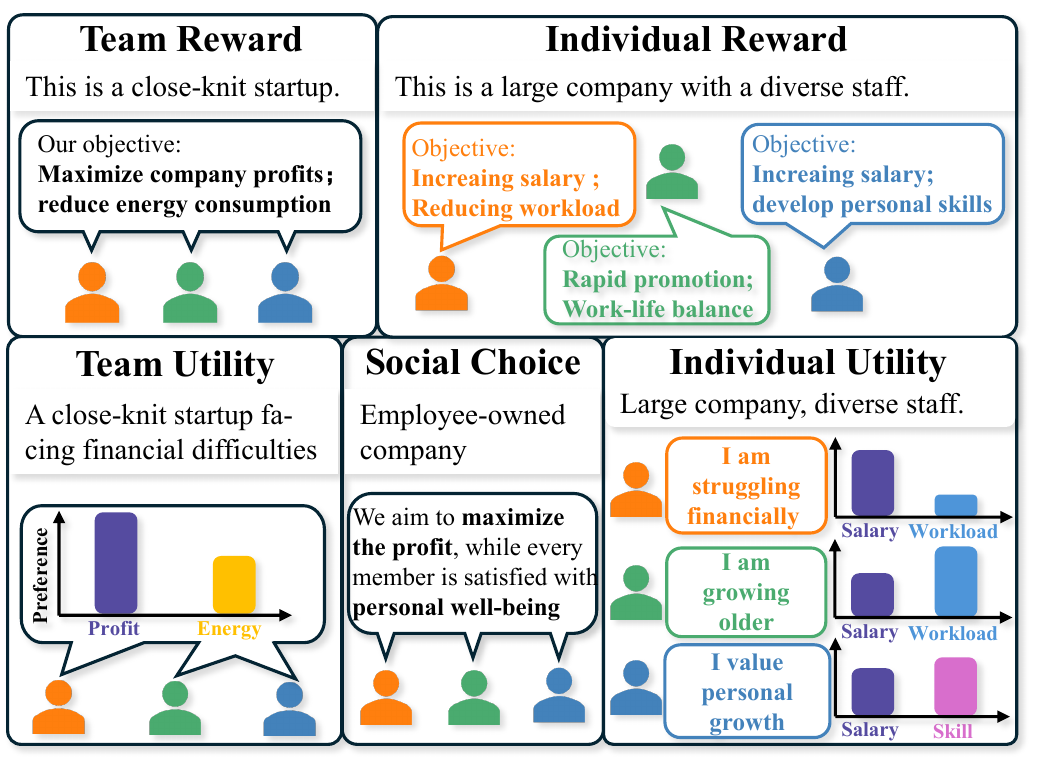}
	\caption{Illustrative examples of the five types of reward and utility function setting, using different companies as analogies.}
	\label{fig1}
\end{figure}

In the simplest setting: team utility with team reward, the problem can be simplified into a single-agent formulation, where one entity optimises its joint policy by acting over the entire joint action space \cite{ruadulescu2020multi}.
In the more challenging team utility with individual reward setting, Hu et al. \cite{hu2023mo} have proposed MO-MIX to solve Partially Observable MO Markov Decision Process (POMOMDP)s within the Centralised Training with Decentralised Execution (CTDE) framework. Their approach incorporates a Multi-Objective Conditioned Agent Network for evaluating action-values under the team utility, a parallel mixing network for estimating joint action-values, and a preference-based exploration strategy to promote diverse and well-distributed Pareto-optimal policies.

However, the optimisation in the most general setting: individual utility functions, remains unsolved. But this setting is critical as it is able to reflect most real-world scenarios where agents act based on personal intentions, constraints, or their distinct roles in the environment. Designing a general optimisation framework for this setting is challenging, with the reason that each agent must optimise its policy according to its own utility function, while the reward of that policy depends on the joint policy with other agents under their individual utility functions. Since these utility functions are heterogeneous and potentially conflicting, each agent’s policy cannot be updated and performed in isolation \cite{assos2024maximizing}. 

This naturally leads to the optimisation in MAMOSs as a Bayesian game, where each agent possesses private information, such as utility functions, objectives, or local observations, and must form beliefs about the policies of others to make its own optimal decision. The canonical solution concept in such settings is the Bayesian Nash Equilibrium (BNE) \cite{saglam2025bayesian}, a joint policy in which no agent can unilaterally improve its expected utility, given its beliefs about the types and policies of others. Achieving BNEs in MAMOS settings is non-trivial due to the intensified non-stationarity introduced by other agents’ private utility functions, multiple rewards settings, and the associated policies \cite{assos2024maximizing}.

While CTDE has become a dominant paradigm in MARL to optimise the individual policy (actor) by forming beliefs about the policies of others in the centralised critic network \cite{li2025gtde}, it does not guarantee convergence to BNE when agents’ utility functions are partially or entirely unknown to each other in MAMOSs, since the CTDE framework will still fail to capture or model utility-based policies with the unknown utility function. 
As a result, the learning process is still highly non-stationary from each agent’s perspective, and the BNE remains attainable.
Thus, BNE in MAMOs requires that every agent possesses a model of others’ utility functions and the corresponding policies those utilities induce. This fundamental requirement constitutes the main problem to be addressed in this paper:
\textbf{How to map the global state, utility functions, and the associated joint policies to each agent’s individual policy, so that each agent can learn its own decentralised optimal policy that maximise the global utility, even under heterogeneous objective and utility function settings.}

We find that belief modeling on the joint policy becomes tractable, and BNE becomes theoretically attainable when each agent’s utility function is a deterministic function of the state or observation (e.g., $\bm{w}_i = g(o_i)$). This case also better aligns with many real-world applications. For instance, when purchasing train tickets, a traveler facing tight time constraints may prioritise on-time arrival over cost, whereas one planning in advance during a pre-sale period is more likely to prefer the lowest possible price. Building on this finding, we first prove that when each agent’s utility function is either directly observable or can be modeled from local observations, convergence to a BNE becomes theoretically attainable.
Then, we propose an agent-attention MAMORL framework. This framework implicitly learns a joint belief over other agents’ utility functions and their corresponding policies through centralised agent-attention-based critic training. Thus, each agent learns a mapping from the global state and utility functions to its own policy under such belief. The learned distributed policy of each agent can be executed using only its local observations and private utility function. And no agent has an incentive to unilaterally deviate from its decision given the system-wide context. Thereby the BNE of MAMOSs can be approximated in a fully decentralised and communication-free setting. Our main contributions are summarised as follows:

1.	We formalise the MAMOS optimisation as a general POMOMDP, capable of abstracting various applications with heterogeneous reward and utility function settings. 

2. We bridge between the POMOMDP and Bayesian games, and rigorously prove that even under the CTDE paradigm, the utility function is essential for achieving BNE in decentralised decision-making.

3.	We propose an agent-attention MAMORL framework for scenarios where utility functions are deterministic functions of agents’ local observations. This framework enables the optimal decentralised policy conditioned solely on its local observation and private utility function for each agent.

4.	We conduct comprehensive experiments in the MAMO Particle environment and the MOMALand benchmark. The results demonstrate that the global utility functions and our proposed AA-MAMORL framework consistently improve multiple MO metrics.

\section{Preliminaries}
\subsection{Partially Observable MO Markov Decision Process}
POMOMDP is defined by 
the tuple:
\(
  \Omega = \left(S, \bm{\mathcal{A}}, \bm{R}, \bm{W}, \bm{P}_o,\bm{P}_{wt}, P_t, P_0, \gamma\right)
\). Within this process, $S$ is the state space describing the possible states of all agents and the environment, \(\mathcal{A}_1, \ldots, \mathcal{A}_N \in \bm{\mathcal{A}}\) and \(\bm{w}_1, \ldots, \bm{w}_N \in \bm{W}\) are the action and preference spaces for all agents.  At each time slot, agent \(i\) first uses its observation function \(P_o^i \in \bm{P}_o: S \rightarrow O_i\) to obtain its own observation \(o_i\) based on the state \(s\sim S\). Each agent $i$ uses its MO policy \(\pi_i: O_i \times \bm{w}_i \rightarrow \mathcal{A}_i\) to select its action $a_i\sim \mathcal{A}_i$ since observations coupled with the utility function affect agents’ decisions jointly. The transition function \(P_t: S \times \mathcal{A}_1 \times \ldots \times \mathcal{A}_N \rightarrow S\) transits the current state \(s_t\) to \(s_{t+1}\).
The preference transition function $P_{wt}^t \in \mathbf{P}_{wt}$ transits the preference \(\bm{w}_i[t]\) to \(\bm{w}_i[{t+1}]\). Its setting is divided in two cases in the following sections.
\(P_0\) is the distribution function of the initial state of the environment. Finally, each agent \(i\) obtains vectorised rewards as a function of the state and joint action \(\boldsymbol{r}_i \in \bm{R}: S \times \mathcal{A}_1 \times \ldots \times \mathcal{A}_N \rightarrow \bm{R}:\mathbb{R}^m\). The objective of MAMO optimisation is the optimal joint policy which maximises the weighted sum of all agents' expected rewards and their corresponding preferences: \(R = \sum_{i=0}^{N}\sum_{t=0}^{T} \gamma^t \bm{w}_i^\top \bm{r}^t_i\), where \(\gamma \in [0, 1]\) denotes the discount factor.  



\subsection{The Necessity of Global Preferences in POMOMDP Decision‐Making}

In this section, we theoretically demonstrate that the attainability of BNE in POMOMDP depends on the observability or structural modeling of global preference.

\subsubsection{Case I: Preferences as Unstructured Random Variables}
Assume $\bm{w}_i \sim \text{Unif}(\Delta^k)|_{\Delta^k = {\boldsymbol{w} \in \mathbb{R}^k \mid \sum_j w_j = 1, w_j \ge 0}}$, where each agent's preference is independently and uniformly distributed.

\begin{theorem}[BNE Inapplicability with Unobservable, Uniform Preferences]
\label{thm:unstructured-unobservable}
Suppose that for any $i\neq j$, agent $i$ knows only that other agent's preference $\bm w_j\sim\mathrm{Unif}(\Delta^k)$. Then the classical BNE concept is inapplicable
\end{theorem}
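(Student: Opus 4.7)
The plan is to cast the POMOMDP with unobservable uniform preferences as a Bayesian game and then show that the Bayesian Nash Equilibrium fixed-point condition either admits no well-defined solution or degenerates to an equilibrium that is vacuously insensitive to preferences, so the BNE concept fails to play its usual role. First I would fix each agent's type as its preference $\bm{w}_i$ and a strategy as a measurable map $\sigma_i : O_i \times \Delta^k \to \Delta(\mathcal{A}_i)$; the BNE condition then requires that $\sigma_i(o_i,\bm{w}_i)$ maximise the expected discounted utility $\mathbb{E}[\sum_t \gamma^t \bm{w}_i^{\top}\bm{r}_i^t]$ under the common prior $\mathrm{Unif}(\Delta^k)$ and the opponents' profile $\sigma_{-i}$.

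Second, I would exploit independence: because $\bm{w}_j$ is independent of $s_t$, $o_i$, and every other quantity agent $i$ can condition on, Bayes' rule leaves $i$'s posterior on $\bm{w}_j$ equal to the prior at every history. Hence the only functional of $\sigma_j(\cdot,\bm{w}_j)$ that enters $i$'s expected payoff is the type-marginalised action kernel $\tilde{\sigma}_j(a_j\mid o_j)=\int_{\Delta^k}\sigma_j(o_j,\bm{w}_j)\,d\bm{w}_j$. Any two opponent strategies with identical marginals but arbitrarily different type-conditional behaviour produce identical expected utilities for $i$ and are therefore observationally indistinguishable.

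Third, I would close the argument by an identification-failure/collapse step. The BNE fixed point is required in the space of type-conditional functionals $\{\sigma_j\}$, yet every marginal-kernel fixed point lifts to an uncountable continuum of such functionals, each a nominally distinct BNE, with no selection criterion supplied by the observations or the payoffs. Simultaneously, because $\bm{w}_i$ conveys no information about opponents, the type-conditional structure that distinguishes BNE from ordinary Nash equilibrium becomes vacuous, and the equilibrium collapses to a Nash equilibrium of the preference-averaged game, stripping the solution of the per-agent multi-objective content a BNE is meant to encode.

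The main obstacle I anticipate is making ``inapplicable'' into a formally crisp statement rather than an intuition. The cleanest route is the lifting/equivalence argument above, with the delicate part being to verify that the integrated best-response operator is well defined (measurability and integrability under $\mathrm{Unif}(\Delta^k)$) while simultaneously showing its insensitivity to the type-conditional structure, without implicitly introducing regularity assumptions that would themselves constitute the ``structure'' whose absence is hypothesised.
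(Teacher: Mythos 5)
Your route is genuinely different from the paper's, and its central step does not go through. The paper makes ``inapplicable'' precise by attacking well-posedness: it argues that, because agent $i$ receives no informative signal about $\bm w_{-i}$, the opponents' type-contingent strategies cannot be integrated against agent $i$'s information structure --- the integrand $\bm f_i(\alpha_i, s_{-i}(\theta_{-i}))$ is claimed to be non-measurable with respect to $\sigma(\theta_i)\otimes\mathcal{B}(\Theta_{-i})$, so the interim expected utility (a Bochner integral) does not exist and the best-response correspondence $BR_i$ cannot even be constructed. You instead take the opposite premise: you assume the expectation is well defined, marginalise $\sigma_j$ over $\mathrm{Unif}(\Delta^k)$ to the kernel $\tilde\sigma_j(a_j\mid o_j)$, and then try to derive inapplicability from identification failure and a ``collapse'' to a preference-averaged game. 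This is not the paper's argument, and it is also where your proof breaks.

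Concretely, step 3 is flawed. It is true that agent $i$'s interim payoff depends on $\sigma_j$ only through the marginal kernel $\tilde\sigma_j$, but it does not follow that ``every marginal-kernel fixed point lifts to an uncountable continuum of BNEs'': for a lifted profile to be a BNE, every type $(o_j,\bm w_j)$ of agent $j$ must itself best-respond, and since $j$'s own payoff is $\bm w_j^\top\bm f_j$, the type-conditional behaviour is pinned down (generically uniquely) by $j$'s own optimisation rather than left free. Likewise the game does not collapse to a Nash equilibrium of the preference-averaged game: each agent still conditions its action on its own realised preference, and only the opponents' behaviour is averaged --- which is exactly the standard interim structure under which BNE is defined and, with a common prior and measurable strategies, exists by the usual distributional-strategy fixed-point arguments. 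So your construction, carried to completion, would tend to establish that the BNE concept \emph{is} applicable here, contradicting the statement you are asked to prove; to obtain the theorem as the paper states it you would need an argument of the paper's type, i.e., one that denies the well-posedness of the conditional expectation and hence of the best-response correspondence under agent $i$'s information, rather than one that presupposes it.
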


\begin{theorem}[BNE Attainability with Observable Uniform Preferences]
\label{thm:structured-observable}
Let each agent \(j\)’s preference weight $\bm w_j \;\sim\;\mathrm{Unif}(\Delta^k)$ be drawn independently, and assume that for every pair \(i\neq j\), agent \(i\) \emph{observes} \(\bm w_j\) prior to choosing its action.  Then BNE in behavioral strategies exists.
\end{theorem}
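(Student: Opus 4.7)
The plan is to exploit the fact that observability of the entire preference profile converts the Bayesian game into an ordinary stochastic game on an augmented state, after which standard existence theorems apply. Once agent $i$ observes every $\bm{w}_j$ for $j\neq i$ prior to acting, the joint profile $\bm{w}=(\bm{w}_1,\dots,\bm{w}_N)$ is common knowledge at the decision stage, so each agent's behavioral strategy may be written as a measurable map $\sigma_i\colon O_i\times(\Delta^k)^N\to\Delta(\mathcal{A}_i)$, or more generally as a map on histories that include the revealed preferences. Under this reformulation the POMOMDP becomes a partially observable stochastic game whose state is $(s,\bm{w})$, with $\bm{w}$ evolving according to $P_{wt}$ and with each agent's payoff $\sum_t\gamma^t\bm{w}_i^\top\bm{r}_i^t$ depending on the now-observable preference profile and the joint strategy.

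The first step is to fix the strategy space and equip it with a topology under which it is compact and convex. For finite action spaces this is automatic once one views strategies as distributional strategies in the sense of Milgrom and Weber, using the fact that $\mathrm{Unif}(\Delta^k)$ is absolutely continuous. Second, I would show that the expected discounted utility $U_i(\sigma_1,\dots,\sigma_N)$ is bounded and continuous in $\sigma$: boundedness comes from $\gamma<1$ and bounded rewards, while continuity under the weak topology on distributional strategies follows by dominated convergence, since the transition kernel, observation kernel, and rewards are fixed and the preference density is well behaved. Third, I would verify that $U_i$ is quasi-concave in $\sigma_i$, which in fact holds because $\sigma_i$ enters linearly through the induced action measures when the other players' strategies are held fixed. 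These properties place the problem in the scope of the Fan--Glicksberg fixed-point theorem applied to the best-response correspondence, whose fixed point is by construction a BNE in behavioral strategies of the original POMOMDP.

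The main obstacle I anticipate is the measurability and compactness set-up over the continuous type space $(\Delta^k)^N$. A naive pointwise argument would solve a separate Nash equilibrium for each realization of $\bm{w}$ and then try to stitch these together, but this requires a non-trivial measurable-selection step. Adopting distributional strategies avoids this difficulty by collapsing the entire problem onto a single compact space of joint (type, action) distributions, though I then need to verify that the resulting equilibrium is outcome-equivalent to a behavioral strategy, which is guaranteed by Milgrom--Weber under the absolute continuity of $\mathrm{Unif}(\Delta^k)$. A secondary subtlety is the passage from a one-shot formulation to the sequential POMOMDP; this I would handle by unfolding each agent's behavioral strategy into a policy on histories and invoking Kuhn's theorem to maintain the equivalence between behavioral and mixed strategies, ensuring that the fixed point obtained at the strategy level indeed yields a BNE in the dynamic game.
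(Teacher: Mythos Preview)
Your proposal is correct and follows essentially the same route as the paper: once the preference profile $\bm{w}$ is commonly observed, the residual private information is the observation $o_i$, so one lands in a standard continuous-type Bayesian game with compact convex strategy spaces and continuous bounded payoffs, and existence follows from a Glicksberg/Kakutani-type fixed-point argument. You are actually more careful than the paper on two points it elides---the measurable-selection issue over the continuous type space $(\Delta^k)^N$ (which you handle via Milgrom--Weber distributional strategies) and the passage from the one-shot game to the sequential POMOMDP (via Kuhn)---whereas the paper simply invokes Berge's maximum theorem and Glicksberg on behavioral strategies $s_i:O_i\to\Delta(\mathcal{X}_i)$ without explicitly tracking dependence on $\bm{W}$ or the dynamic structure.
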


\subsubsection{Case II: Preferences as State-dependent Functions}
\begin{theorem}[BNE Existence when \(\bm w_i=g(o_i)\)]
\label{thm:obs-dependent-preferences}
Suppose each agent \(i\) ’s preference weight \(\bm w_i\) is a \emph{deterministic} function of its private observation $\bm w_i = g(o_i)$ where \(g:O_i\to\Delta^k\) is continuous.  Then, under the usual compactness and continuity assumptions on observations and actions, a mixed‐strategy BNE exists.
\end{theorem}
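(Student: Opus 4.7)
The plan is to reformulate the setting of the theorem as a standard continuous Bayesian game and then invoke the Kakutani--Fan--Glicksberg fixed-point theorem on the space of behavioural strategies. Each agent's \emph{type} is identified with its observation $o_i \in O_i$; the common prior on type profiles is the joint distribution induced by $P_0$, $P_t$ and the observation kernels $P_o^i$. Because $\bm w_i = g(o_i)$ is a deterministic continuous function of the type, the preference is already encoded in the type and no auxiliary random variable has to be carried alongside it, so the problem collapses to an ordinary Bayesian game $(N, \{O_i\}, \mu, \{\mathcal{A}_i\}, \{u_i\})$ with interim utility $u_i(\sigma \mid o_i) = \mathbb{E}\bigl[\sum_t \gamma^t\, g(o_i)^\top \bm r_i^{\,t} \,\big|\, o_i, \sigma\bigr]$.

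First I would fix the strategy space: a behavioural strategy for agent $i$ is a measurable map $\sigma_i : O_i \to \Delta(\mathcal{A}_i)$, which by the Milgrom--Weber ``distributional strategies'' construction can be identified with a Borel probability measure on $O_i \times \mathcal{A}_i$ whose $O_i$-marginal coincides with the type-marginal of $\mu$. Compactness of $\mathcal{A}_i$ makes $\Delta(\mathcal{A}_i)$ compact and convex in the weak-$*$ topology, and the resulting space of distributional strategies is nonempty, convex, compact and metrisable in the topology of weak convergence. Next I would show that for fixed $\sigma_{-i}$ the interim payoff is linear in $\sigma_i(\cdot \mid o_i)$, so the pointwise $\arg\max$ is nonempty and convex; and that $U_i$ is jointly continuous in $(\sigma_i, \sigma_{-i})$ by dominated convergence, using continuity of $g$, continuity of $\bm r_i$ in $(s,a)$, and compactness of observation and action spaces. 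Berge's maximum theorem then delivers upper-hemicontinuity of each best-response correspondence $B_i$, so the joint map $B = \prod_i B_i$ satisfies all hypotheses of Kakutani--Fan--Glicksberg, whose fixed point is the desired mixed-strategy BNE.

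I expect the main technical obstacle to be the continuity leg of Berge's theorem when the type space is a continuum, because this is precisely where the hypothesis ``$g$ continuous'' has to be used: a discontinuous $g$ would allow the scalarised reward $g(o_i)^\top \bm r_i$ to jump with the type and invalidate the dominated-convergence step that propagates continuity through the nested expectations over $o_{-i}$ and the trajectory. Once joint continuity of $U_i$ is secured, the rest of the argument reduces to a routine transcription of Glicksberg's existence proof and introduces no new conceptual difficulty.
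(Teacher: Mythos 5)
Your proposal is correct and follows essentially the same route as the paper: the type collapses to the observation $o_i$ with the preference absorbed through the continuous map $g$, the scalarised payoff $g(o_i)^\top \bm f_i(x_i,x_{-i})$ is continuous and bounded, and existence follows from the distributional-strategy formulation with Berge's maximum theorem and the Kakutani/Glicksberg fixed-point argument, exactly as the paper does by reusing its Theorem~2 machinery. Your write-up is simply a more explicit version of the same argument (naming Milgrom--Weber and the dominated-convergence step that the paper leaves implicit).
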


\begin{proof}
The proofs of the above theorems are provided in detail in Appendix A.
\end{proof}




\section{General Multi-Agent Multi-objective Reinforcement Learning}
In a game with \(N\) agents, each agent has $M$ objectives, and the corresponding preference vector \(\bm{W} = \{\bm{w}_1, \ldots, \bm{w}_N\} \) indicates the importance of each objective for the corresponding agent, where \(\bm{w}_i = \{w_i^1, \ldots, w_i^M |\sum_{j=1}^{M} w_i^j = 1\}\). Based on the above theorem, we further develop distinct MAMORL frameworks designed for whether preferences are modeled as unstructured random variables or observation-dependent functions. These frameworks are designed to generalise across diverse real-world settings of states, actions, preferences, and rewards, thereby enabling robust optimisation in MAMOS scenarios.

\subsection{Global-preference-based MAMORL for Case I}
The policy set \(\bm{\pi} = \{\pi_1, \ldots, \pi_N\}\) is assigned to all agents and is parametrised by \(\bm{\theta^\pi} = \{\theta^{\pi_1}, \ldots, \theta^{\pi_N}\}\). According to Theorem \ref{thm:structured-observable}, the global preference $\bm{W}$ is necessary for all agents to reach BNE. Thus, the input is the observation $o_i$ achieved from the state $s$ and the global preference $\bm{W}$. The probability of each action $\pi_i(a_i | o_i, \bm{W})$ is the output. 

\(\bm{v}_i^{\pi_i}: \mathbb{R}^m\) is the vectorised MO state-value of the policy \(\pi_i\), which approximates the expected rewards under the initial state distribution \(P_0\), given the actions $A_0$ and the given preference \(\bm{w}\), denoted as:
\begin{equation}
\begin{split}
  \bm{v}_i^{\pi_i} &= \mathbb{E}_{s_0 \sim P_0} \left[ \bm{q}_i^{\pi_i} (s_0, \bm{A}_0, \bm{W}) \right]\\\bm{A}_0 &= \{\pi_1(o_1[0], \bm{W}),..,\pi_N(o_N[0], \bm{W})\},{o_i[0]=P_o^i(s_0)}.
  \end{split}
\end{equation}

This MO state-value vector can be linearly combined with the preference \(\bm{w}_i\): \(v_i^{\pi_i}= \bm{w}_i^\top \bm{v}_i^{\pi_i}
\). The objective of each agent is to find the policy $\pi_i$ which maximises \(v_i^{\pi_i}\) under any given preference \(\bm{w}\). 

The vectorised MO action-value function for the policy \(\pi_i\) based on the state-action-preference tuple $((s, \bm{A}, \bm{W}))|\bm{A} = \{a_i,..,a_N\}$ is utilised to approximate the expected rewards under the policy \(\pi_i\), which is defined as Eq. \ref{MOQ}.
\begin{equation}
\begin{split}
    \bm{q}_i^{\pi_i}(s, \bm{A}, \bm{W}) &= \mathbb{E}_{\pi_i} [ \sum_{t=0}^{\infty} \gamma^t \bm{r}_i(s[t], \bm{A}[t])],\\  s[0]=s, \bm{A}[0]&=\{a_1,..,a_N\},a_i[t+1] = \pi_i(o_i[t], \bm{W}[t]) \\\bm{W}[0]&=\{\bm{w}_1,..,\bm{w}_N\|\bm w_i \;\sim\;\mathrm{Unif}(\Delta^k)\},
    \label{MOQ}
\end{split}
\end{equation}
where \(\bm{q}_i^{\pi_i}(s, \bm{A}, \bm{W})\) is an \(m\)-dimensional vector representing expected rewards of \(m\) objectives for agent \(i\). It extends the MO state-value vector by explicitly incorporating the current action; it can be directly optimised in policy learning:
$\bm{q}_i^{\pi_i}(s,\bm{A},\bm{W})
= \mathbb{E}_{s^{\prime} \sim P(\cdot \mid s, \boldsymbol{A})}\left[\boldsymbol{r}_i(s, \boldsymbol{A})+\gamma \boldsymbol{v}_i^{\pi_i}\left(s^{\prime}, \boldsymbol{W}\right)\right]$.

A centralised trained MO action-value function \(\bm{Q}_i^{\pi_i}(s, a_1, \ldots, a_N, \bm{W}|\theta^{\bm{Q}_i})\) parameterised by \(\theta^{\bm{Q}_i}\) is deployed to represent $\bm{q}_i^{\pi_i}(s, \bm{A}, \bm{W})$. The inputs are the actions of all agents \(a_1, \ldots, a_N\), the preference settings of all agents \(\bm{W}\), and the state information \(s\). It outputs represent the approximate expected rewards $\bm{v}_i^{\pi_i}$. 

The policy of each agent \(\pi_i\) is updated by the gradient of the expected return $J\left(\theta^{\pi_i}\right)=\mathbb{E}_{s_t, a_t \sim \pi}\left[\boldsymbol{w}_i^{\top} \boldsymbol{r}_i\left(s_t, a_t\right)\right]$ aimed at maximising the weighted sum of the approximate expected rewards $\bm{Q}_i^{\pi_i}$ and the current preference $\bm{w}_i$, which is represented as:
\begin{equation}
\begin{aligned}
 \nabla_{\theta^{\pi_i}} J(\theta^{\pi_i}; \bm{W}) = & \mathbb{E}_{s \sim \rho^\pi, a_i \sim \pi_i} [ \nabla_{\theta^{\pi_i}} \log \pi_i(a_i \mid o_i, {W}) \\ & \bm{w}_i^\top \bm{Q}_i^{\pi_i}(s, a_1, \ldots, a_i, \ldots, a_N, \bm{W} \mid \theta^{\bm{Q}_i}) ],
\label{MAMOPG}
\end{aligned}
\end{equation} 
where $\rho^\pi$ is the state distribution induced by the policy $\pi$. This framework can also be extended to deterministic policies. The policy is reformulated as the continuous action version: \(\bm{\mu} = \{\mu_1(o_1, \bm{W} \mid \theta^{\mu_1}), \ldots, \mu_N(o_N, \bm{W}\mid \theta^{\mu_N})\}\). The corresponding MAMO Deep Deterministic Policy Gradient (MAMODDPG) is denoted as:
\begin{equation}
\begin{split}
\nabla_{\theta^{\mu_i}} J(\theta^{\mu_i}; \bm{W}) = \mathbb{E}_{s, a, \bm{W} \sim \mathcal{D}} [ \nabla_{a_i} \bm{w}_i^\top \bm{Q}_i^{\mu_i} (s, a_1, \ldots, a_i, \ldots, \\ a_N, \bm{W} \mid \theta^{\bm{Q}_i}) \mid_{a_i = \mu_i(o_i, \bm{W} \mid \theta^{\mu_i})} \nabla_{\theta^{\mu_i}} \mu_i (o_i, \bm{W} \mid \theta^{\mu_i})],
\end{split}
\label{MAMODDPG}
\end{equation}
where the experience replay buffer \(\mathcal{D}\) contains tuples \((s, s', a_1, \ldots, a_N, \bm{W}, \bm{r}_1, \ldots, \bm{r}_N)\). By sampling the experiences from $\mathcal{D}$, all agents' deterministic actor networks are updated by maximising the MAMODDPG, and all agents' critic networks are updated by minimising the MO temporal difference (MOTD) error for the more accurate approximation:
\begin{equation}
\begin{split}
L(\theta^{\bm{Q}_i}) = &\mathbb{E}_{s, s', \bm{a}, \bm{W}, \bm{r} \sim \mathcal{D}} \\ &\left[ \bm{Q}_i^{\mu_i} (s, a_1, \ldots, a_i, \ldots, a_N, \bm{W} \mid \theta^{\bm{Q}_i}) - \bm{y}_i \right]^2.
\end{split}
\label{MOTDE}
\end{equation}

\begin{equation}
\begin{split}
\bm{y}_i = & \bm{r}_i + \gamma \bm{Q}_i^{'\mu_i'} (s', a_1', \ldots, a_i^{GPI}, \ldots, a_N', \bm{W} \mid \theta^{\bm{Q}_i'}) \\ & \mid_{a_j' = \mu_j'(o_j', \bm{W}), a_i^{GPI} = \mu_i^{GPI}(o_i', \bm{W})},
\end{split}
\label{y}
\end{equation}
where \(\bm{\mu}' = \{\mu_1', \ldots, \mu_N'\}\) and \(\bm{Q}_i^{'\mu_i'}\) are the target actor networks and critic networks for all agents maintained during the centralised training. \(\mu_i^{GPI}\) is the agent's policy integrated with Generalised Policy Improvement \cite{yang2019generalized}, which assists in the rapid exploration of the entire preference space.
In GPI, an alternative policy set for each agent \(\Pi_i\) is maintained.
All policies in \(\Pi_i\) are used to generate multiple actions, and the one with the maximum expected return \(Q_{max}^{\pi_i^*}(s, a)\) is selected by the agent $i$. For the deterministic policy and MAMO context, it is reformulated as:
\begin{equation}
\begin{split}
\mu_i^{GPI}(o_i, \bm{W}) = & \mu_i (o_i, \arg \max_{\boldsymbol{w}'_i \sim\bm{\Psi}_i} \boldsymbol{w}_i^\top \bm{Q}^{\mu_i}_i(o_i, a_1', \ldots, \\ &\mu_i(o_i, \bm{W}'), \ldots, a_N', \bm{w}_1, \ldots, \bm{w}'_i, \ldots, \bm{w}_N)).
\end{split}
\end{equation}

The policy set \(\Pi_i\) is replaced by policies generated with the random global preferences \(\bm{W}'=\{\bm{w}_1, \ldots, \bm{w}'_i, \ldots, \bm{w}_N\}\) , where other agents' preferences are fixed while the preference of itself is randomly sampled from the preference distribution $\bm{\Psi}_i$. The critic network \(\bm{Q}^{\mu_i}\) generates the expected action-value vectors from different preferences. After the weighted sum with the given preference \(\bm{w}_i\), the maximum summation is selected, and the associated action \(\mu_i(o_i, \boldsymbol{W}^*)\) is selected as the optimal action.

For the optimisation for unstructured random preference, the global preference ensures each agent maintains a consistent belief over the global joint policy during decision-making. This mechanism facilitates consensus among agents toward maximising the global utility, mitigating the non-stationarity in the evolving joint MO policy, and enabling the BNE. The following experiments demonstrate the global preference leads to an improvement in overall utility. Such a design is appropriate in scenarios where global coordination is critical and communication is available, such as swarm robotics or UAV formations.

\subsection{Agent-Attention MAMORL for Case II}
The global preference above inevitably violates the principle of decentralised execution in the CTDE paradigm, introducing non-negligible communication overhead into distributed systems. In scenarios where inter-agent communication is entirely infeasible, such as disaster-response missions in disastrous environments, this approach becomes incompatible with the constraints of fully decentralised systems. Consequently, we observe that assigning preferences as completely random variables departs from several real-world applications, where agent preferences are often shaped by environmental states or agents' observations.
\begin{figure*}[!t] 
	\centering
	\includegraphics[width=6.2in]{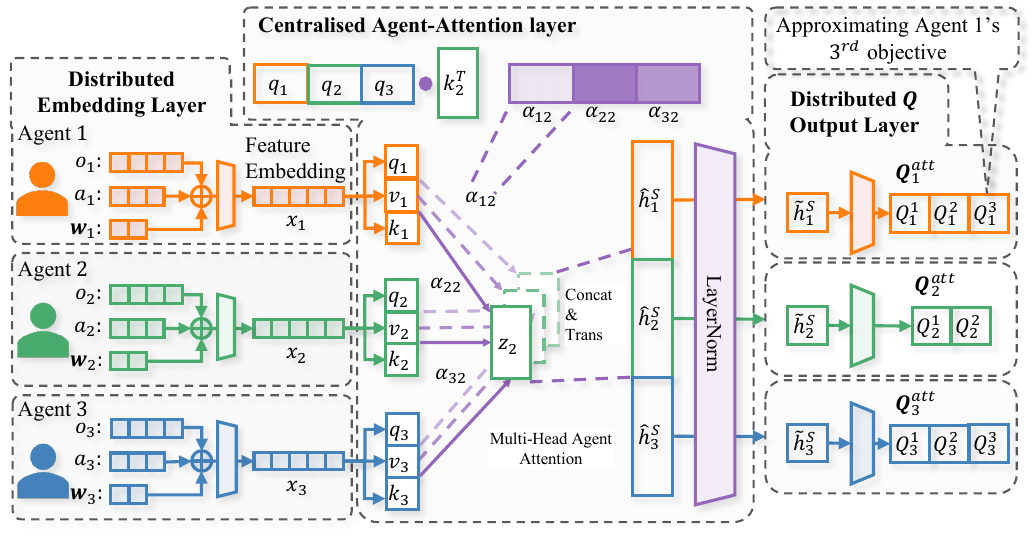}
	\caption{Agent Attention-based MAMORL Framework. Each agent uses its individual embedding layer to extract feature embeddings $x_i$ based on $o_i$, $a_i$, and $\bm{w}_i$. All agents’ embeddings are concatenated and fed into a centralised agent-attention layer. In this layer, each agent’s preference and policy are accessible to others, enabling agents to model the influence of others’ preferences and associated policies on their own policies and rewards. The output of this layer, $\hat{H}^S$, is sliced to obtain the feature corresponding to agent $i$: $\tilde{h}^S_i$ through the LayerNorm layer, which is then passed to agent $i$’s Q output layer to produce a vectorised Q-value $\mathbf{Q}^{att}_i$ that approximates its multiple rewards.}
	\label{fig2}
\end{figure*}
To address this gap, we then design an agent-attention MAMORL (AA-MAMORL) framework that better aligns with practical scenarios where preferences are observation-dependent in this section. This framework maintains the integrity of CTDE, while allowing all agents to converge toward a BNE and jointly optimise the global utility in a scalable and communication-efficient manner.

Based on theorem 3, when preference $\bm{w}_i$ is a deterministic function of the agent's observation $\bm{w}_i=g_i(o_i)$, the global decision is no longer needed for the decision. Thus the policy for each agent is defined as $\pi_i(a_i | o_i)$. And it is updated by the gradient of the expected return \(J(\theta^{\pi_i}) = \mathbb{E}[R_i]\), which is represented as
\begin{equation}
\begin{aligned}
 \nabla_{\theta^{\pi_i}} J(\theta^{\pi_i}) = & \mathbb{E}_{s \sim \rho^\pi, a_i \sim \pi_i,\bm{w}_i =g_i (o_i)} [ \nabla_{\theta^{\pi_i}} \log \pi_i(a_i \mid o_i) \\ & \bm{w}_i^\top \bm{Q}_i^{\pi_i}(s, a_1, \ldots, a_i, \ldots, a_N, \bm{W} \mid \theta^{\bm{Q}_i})[i] ].
\label{MAMOPG2}
\end{aligned}
\end{equation}

This framework can also be extended to deterministic policies. The policy is reformulated as the continuous action version: \(\bm{\mu} = \{\mu_1(o_1 \mid \theta^{\mu_1}), \ldots, \mu_N(o_N\mid \theta^{\mu_N})\}\), which is updated by the MAMO Deep Deterministic Policy Gradient (MAMODDPG):
\begin{equation}
\begin{split}
\nabla_{\theta^{\mu_i}} J(\theta^{\mu_i}) = \mathbb{E}_{s, a \sim \mathcal{D},\bm{w}_i=g_i(o_i)} [ \nabla_{a_i} \bm{w}_i^\top \bm{Q}_i^{\mu_i} (s, a_1, \ldots, \\a_i, \ldots,  a_N, \bm{W} \mid \theta^{\bm{Q}_i}) \mid_{a_i = \mu_i(o_i \mid \theta^{\mu_i})} \nabla_{\theta^{\mu_i}} \mu_i (o_i \mid \theta^{\mu_i})].
\end{split}
\label{MAMODDPG2}
\end{equation}

According to Theorem 2, the structural modeling of global preference and policy is necessary. Thus,
for the critic network, let $h^S = \{h^S_1, \dots, h^S_N\}$ denote the set of feature embeddings from $N$ agents, where each feature embedding $h^S_i$ combines the observation $o_i \in \mathbb{R}^{D_o}$, agent-specific actions $a_i$ , and local preference $\bm{w_i}$: $h^S_i = [o_i; a_i; \bm{w}_i]$

Each embedding is mapped into a common embedding space of dimension $d$ through the linear encoder layer of each agent, forming $\mathbf{x}_i \in \mathbb{R}^d$ that corresponds to the $i$-th agent's latent feature.

To effectively capture inter-agent influence under a given state and the global preference which is inferrable, and associated joint policy, we employ an agent-level attention mechanism. Unlike traditional attention applied to language or vision \cite{han2024demystify} where tokens or patches are arranged with spatial or sequential prior, here each $\mathbf{x}_i$ represents an independent and potentially heterogeneous agent's state, action, and preference.
This attention module is shared across all agents. It serves as an explicit relational reasoning mechanism, enabling each agent to dynamically incorporate inter-agent influences based on learned preference-specific and policy-specific patterns, thereby modeling agent-aware relational dependencies in a fully differentiable manner.


To capture the relational dependencies between agent $i$ and all others (including itself), the model first transforms the agent's embedding via learned projection matrices into query, key, and value vectors:
\begin{equation}
\mathbf{q}_i = \mathbf{x}_i \mathbf{W}^Q, \quad \mathbf{K} = \mathbf{X} \mathbf{W}^K, \quad \mathbf{V} = \mathbf{X} \mathbf{W}^V,
\end{equation}
where $\mathbf{W}^Q, \mathbf{W}^K, \mathbf{W}^V \in \mathbb{R}^{d \times d_h}$ are shared across agents within a given attention head, $\bm{X}$ is the collective embedding of all $N$ agents. Each agent thus uses its own query vector $\mathbf{q}_i$ to compute attention weights over all $N$ key vectors:
\begin{equation}
\boldsymbol{\alpha}_i = \text{softmax}\left( \frac{\mathbf{q}_i \mathbf{K}^\top}{\sqrt{d_h}} \right) \in \mathbb{R}^{1 \times N},
\end{equation}
which reflects how much agent $i$'s utility and policy attend to other agents when updating its representation.

The resulting embedding for agent $i$ under a single attention head is computed as:
\begin{equation}
\mathbf{z}_i = \boldsymbol{\alpha}_i \mathbf{V} = \sum_{j=1}^N \alpha_{ij} \mathbf{v}_j.
\end{equation}

The interaction weights $\alpha_{ij}$ adaptively quantify how much agent $j$'s utility function and associated policy affect agent $i$'s rewards, which is critical in maintaining the joint policy stationary in MAMOSs.

To enrich the model's expressiveness, multiple such attention heads are used in parallel. Each head independently projects the inputs using distinct parameters, producing head-specific embeddings $\mathbf{z}_i^{(1)}, \dots, \mathbf{z}_i^{(h)}$. These are then concatenated and linearly transformed to obtain the final output embedding:
\begin{equation}
\hat{h}^S[i] = {Concat}\left( \mathbf{z}_i^{(1)}, \dots, \mathbf{z}_i^{(h)} \right) \mathbf{W}^O, \quad \mathbf{W}^O \in \mathbb{R}^{hd_h \times d}.
\end{equation}

All agent's final embedding of different heads will be concatenated: $\hat{H}^S=Concat\left( \hat{h}^S_1, \dots, \hat{h}^S_N \right)$. Subsequently, a feed-forward network (FFN) with residual connections and layer normalisation further enhances representational capacity:

\begin{equation}
\tilde{h}^S = \text{LayerNorm}(\hat{h}^S + \text{FFN}(\hat{h}^S)).
\end{equation}

Finally, the slice of each agent $\tilde{h}^S_i$ is processed by the individual output layer to estimate agent-specific Q-values, enabling accurate approximation of vectorised action values under complex, multi-agent dynamics for agent $i$:

\begin{equation}
\bm{Q}_i^{att} = Output(\tilde{h}^S_i) .
\end{equation}

And the attention-based MOTD error becomes:
\begin{equation}
L^{att}(\theta^{\bm{Q}_i}) = \mathbb{E}_{s, s', \bm{a}, \bm{W}, \bm{r} \sim \mathcal{D}} \left[\bm{Q}_i^{att}(s,A,\bm{W})- \bm{y}^{att}_i \right]^2.
\label{MOTDE}
\end{equation}
\begin{equation}
\bm{y}^{att}_i =  \bm{r}_i + \gamma \bm{Q}_i^{'att}(s',A',W')  \mid_{a_j' = \mu_j'(o_j'), a_i^{GPI} = \mu_i^{GPI}(o_i')}.
\label{y}
\end{equation}

 By minimising the attention-based MOTD error, the parameters of each agent’s embedding and output layers, as well as the shared agent-attention module, are jointly updated. This enables each agent to better learn how variations in its own policy affect the vectorised rewards under the global utility functions and associated joint policies. Such relational modeling facilitates utility-aware policy improvement, guiding agents toward both global utility maximisation and convergence to a BNE. The pseudocode of these two learning frameworks can be found in Appendix B.

\begin{table*}[t]
\begingroup
\fontsize{9pt}{11pt}\selectfont
\renewcommand{\arraystretch}{1.2}
\centering
\caption{Performance comparison in 9 MAMO environments. Results are reported as mean ± standard deviation over 10 seeds.}
\label{tab:ip-hv}
\begin{tabular}{l|ccccccc}
\toprule
\textbf{Env} & \textbf{Metric} & \textbf{GPIPD} & \textbf{MOMIX} & \textbf{IP} & \textbf{MADDPG} & \textbf{AA} & \textbf{GP} \\
\midrule
\multirow{2}{*}{Catch} & GU & $315.2 \pm 3.9$ & $82.0 \pm 2.1$ & $161.0 \pm 2.4$ & $397.8 \pm 2.7$ & $\mathbf{528.1 \pm 26.7}$ & $255.9 \pm 91.4$ \\
                         & HV & $251.6 \pm 43.0$ & $2.8 \pm 4.0$ & $93.5 \pm 3.9$ & $124.1 \pm 15.7$ & $\mathbf{215.8 \pm 22.9}$ & $78.4 \pm 22.7$ \\
\midrule
\multirow{2}{*}{Escort} & GU & $269.0 \pm 66.3$ & $67.6 \pm 2.9$  & $290.0 \pm 3.4$ & $316.0 \pm 7.4$ & $\mathbf{613.8 \pm 90.8}$ & $569.9 \pm 0.6$ \\
                         & HV & $385.1 \pm 43.8$ & $12.9 \pm 3.4$& $137.6 \pm 11.7$ & $46.4 \pm 7.7$ & $\mathbf{184.8 \pm 75.0}$ & $133.8 \pm 37.2$ \\
\midrule
\multirow{2}{*}{Walker} & GU & $-103.3 \pm 0.2$ & $-99.3 \pm 0.4$ & $-102.4 \pm 0.0$ & $-100.9 \pm 0.1$ & $\mathbf{-25.8 \pm 6.8}$ & $-31.4 \pm 21.1$ \\
                         & HV & $12.6 \pm 6.6$ & $22.0 \pm 4.3$ & $15.9 \pm 3.0$ & $19.9 \pm 9.0$ & $\mathbf{3345.7 \pm 475.6}$ & $2115.7 \pm 511.6$ \\
\midrule
\multirow{2}{*}{Sur} & GU & $405.2 \pm 1.1$ & $254.6 \pm 19.6$ & $225.7 \pm 33.8$ & $405.4 \pm 0.3$ & $\mathbf{615.7 \pm 58.9}$ & $436.1 \pm 1.2$ \\
                         & HV & $70.9 \pm 13.9$ & $143.6 \pm 24.7$ & $88.1 \pm 29.6$ & $261.2 \pm 13.9$ & $\mathbf{318.4 \pm 72.0}$ & $96.3 \pm 85.5$ \\
\midrule
\multirow{2}{*}{Adv} & GU & $-155.1 \pm 11.7$ & $-61.2 \pm 11.7$ & $-160.8 \pm 34.3$ & ${-150.0 \pm 16.4}$ & $\mathbf{-26.9 \pm 2.4}$ & $-64.5 \pm 6.2$ \\
                         & HV & $276.9 \pm 29.2$ & $622.0 \pm 80.7$ & $379.7 \pm 88.4$ & $201.0 \pm 36.1$ & $\mathbf{963.0 \pm 43.4}$ & $707.5 \pm 77.6$ \\
\midrule
\multirow{2}{*}{Push} & GU & $-83.4 \pm 3.5$ & $-44.3 \pm 33.5$ & $-216.6 \pm 54.2$ & $-202.1 \pm 19.2$ & $\mathbf{-21.1 \pm 1.9}$ & $-40.6 \pm 15.7$ \\
                         & HV & $726.5 \pm 101.7$ & $658.3 \pm 69.7$ & $848.0 \pm 57.6$ & $741.1 \pm 79.0$ & $\mathbf{2183.1 \pm 469.9}$ & $1382.6 \pm 87.3$ \\
\midrule
\multirow{2}{*}{Ref} & GU & $-99.5 \pm 22.8$ & $-66.9 \pm 22.8$ & $-109.1 \pm 15.1$ & ${-265.7 \pm 431.6}$ & $\mathbf{-51.1 \pm 2.2}$ & $-67.7 \pm 9.7$ \\
                         & HV & $564.2 \pm 135.5$ & $763.0 \pm 43.9$ & $542.6 \pm 58.4$ & $488.4 \pm 78.2$ & ${749.8 \pm 91.1}$ & $\mathbf{784.2 \pm 114.8}$ \\
\midrule
\multirow{2}{*}{Spread} & GU & $-73.8 \pm 1.2$ & $-168.6 \pm 41.2$ & $-130.7 \pm 25.9$ & ${-198.0 \pm 42.8}$ & $\mathbf{-53.8 \pm 0.6}$ & $-128.3 \pm 38.7$ \\
                         & HV & $143.2 \pm 195.5$ & $278.3 \pm 25.1$ & $175.0 \pm 12.4$ & $673.1 \pm 84.6$ & $\mathbf{846.0 \pm 27.6}$ & $384.2 \pm 59.3$ \\
\midrule
\multirow{2}{*}{Tag} & GU & $-116.4 \pm 4.0$ & $-31.7 \pm 7.0$ & $-145.2 \pm 33.7$ & ${-105.1 \pm 17.2}$ & $\mathbf{-15.0 \pm 1.5}$ & $-57.8 \pm 15.4$ \\
                         & HV & $176.7 \pm 67.5$ & $358.9 \pm 24.2$ & $289.4 \pm 28.6$ & $243.9 \pm537.4$ & $\mathbf{521.8 \pm 56.2}$ & $295.1 \pm 28.6$ \\
\bottomrule
\end{tabular}
\endgroup
\end{table*}

\section{Experiment Results and Analysis}
\subsection{Experiment Settings}
\subsubsection{Datasets}
\begin{itemize}
\item \textbf{MOMA particle environments}: 
A series of environments extended from the grounded particle environment \cite{lowe2017multi} into an MO version. The first objective aligns with the original one, the second one is the energy consumption related to movement and communication. The benchmark includes the following scenarios: \textit{Push}, \textit{Adversary}, \textit{Reference}, \textit{Spread}, and \textit{Tag}.
\item \textbf{MOMALand} \cite{felten2024momaland}:  
A benchmark that builds on the PettingZoo API and supports MAMO learning by returning vector-valued rewards. It includes diverse scenarios such as \textit{Mountain Walker}, \textit{Escort}, \textit{Catch}, and \textit{Surround}. Detailed descriptions of these environments can be found in Appendix C.
\end{itemize}
 \subsubsection{Baselines}
\begin{itemize}
\item \textbf{MO-MIX} \cite{hu2023mo}: Utilises preference-conditioned local action-value estimation and a parallel mixing network to compute joint value functions. A preference-based exploration mechanism is introduced to encourage well-distributed Pareto-optimal solutions.

\item \textbf{GPI-PD} \cite{alegre2023sample}: Combines GPI with a Dyna-style MORL approach to prioritise updates for improved sample efficiency. Modifications are made to support the multi-agent setting.

\item \textbf{Individual Preference (IP)}: Each agent learns its policy based solely on its local observation and private preference vector. This can be viewed as a part of ablation tests.

\item \textbf{MADDPG} \cite{zhang2024multi}
: A standard single-objective MADDPG baseline with scalarised rewards computed from multiple objectives using current preferences. This can be viewed as a part of ablation tests.
\end{itemize}
\subsubsection{Evaluation Metrics}
\begin{itemize}
 \item \textbf{Global Utility (GU)} \cite{alegre2023sample}: Multiple objectives are weighted summed by the preference $\bm{w}_i$ to achieve the individual utility $v_i^{\pi_i}(\bm{w}_i)$ for each agent. All agent's individual utility will be averages to get the GU. We average over 128 initial states for diverse preference settings to approximate the preference space.
    \item \textbf{Hypervolume (HV)} \cite{zitzler2007hypervolume}: The volume of the area in the objective space enclosed by the reference points and the non-dominated solutions obtained by the algorithm.

 \end{itemize}

\begin{figure*}[!t]
    \centering
        \subfigure[Mountain Walker]{
        \includegraphics[width=0.49\columnwidth]{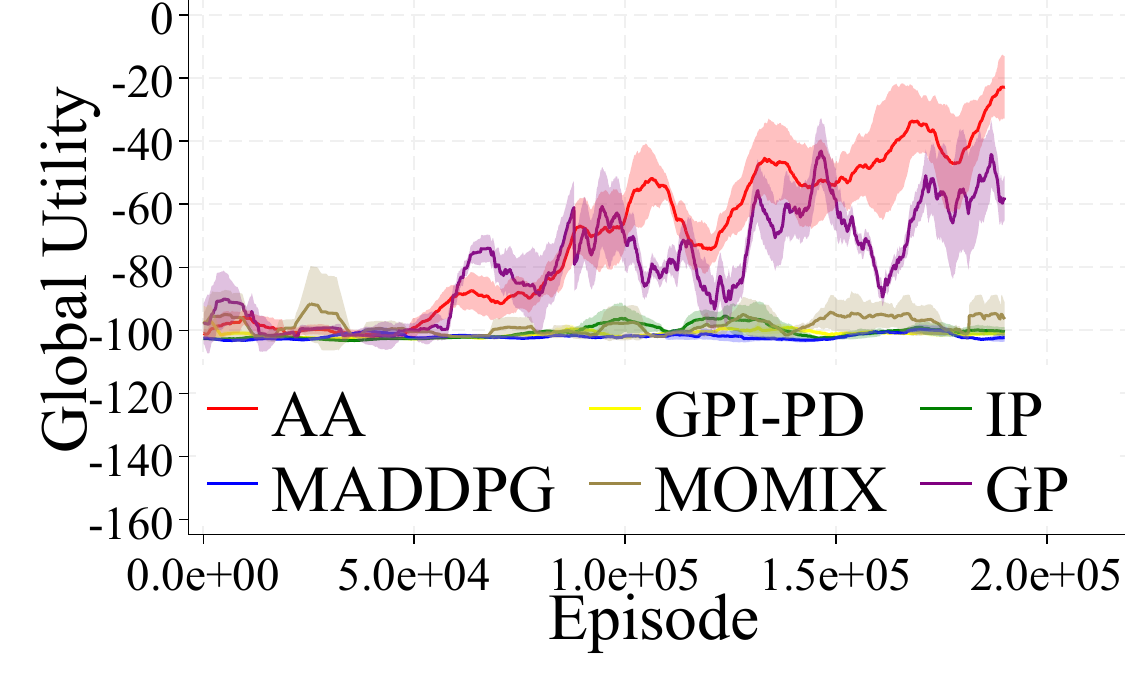}
        \label{fig:image3}
    }
     \subfigure[Surround]{
        \includegraphics[width=0.49\columnwidth]{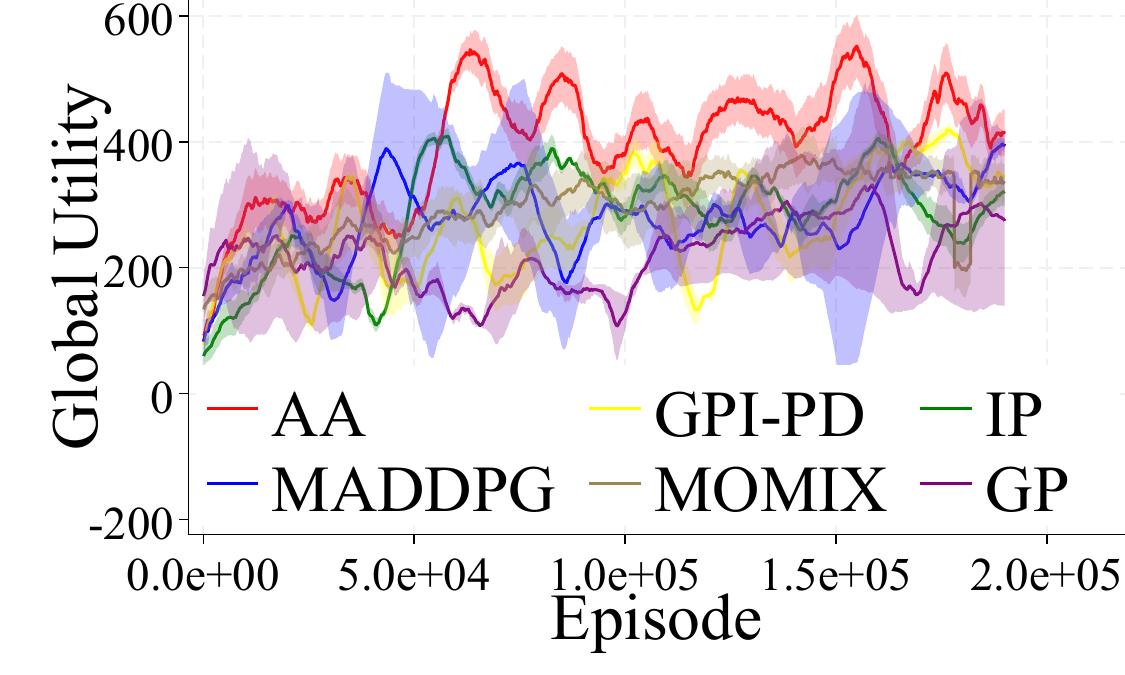}
        \label{fig:image3}
    }
    \hspace{0\textwidth} 
    \subfigure[Tag]{
        \includegraphics[width=0.49\columnwidth]{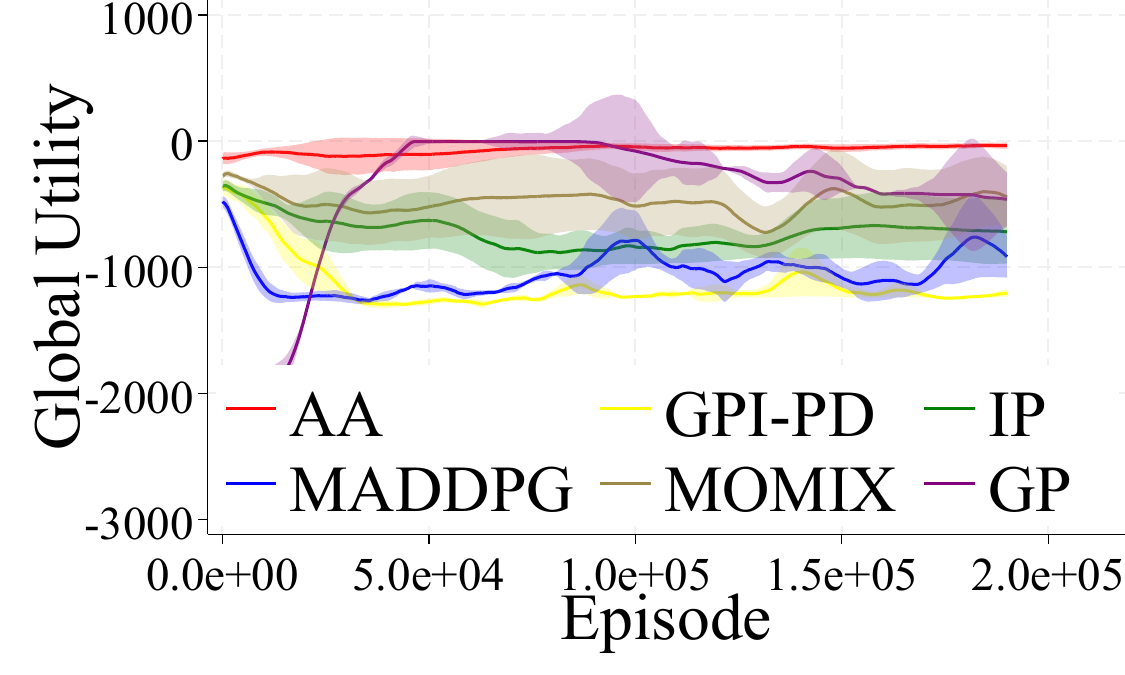}
        \label{fig:image2}
    }
    \hspace{0\textwidth} 
    \subfigure[Push]{
        \includegraphics[width=0.49\columnwidth]{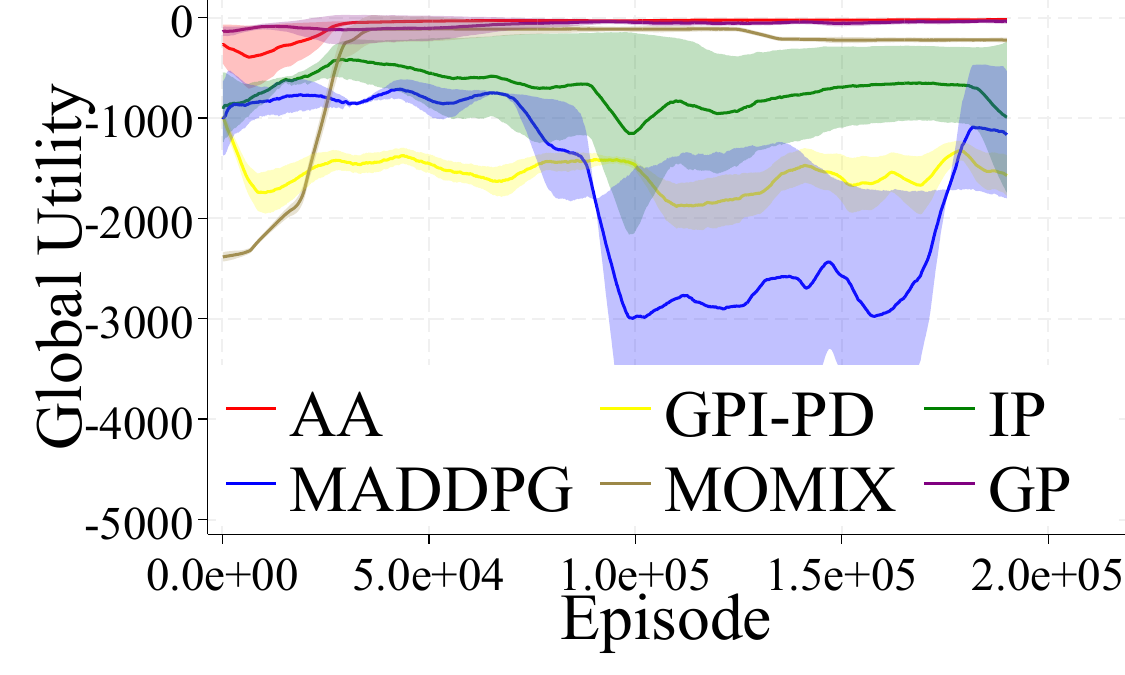}
        \label{fig:image1}
    }
    \caption{Average and 95\% confidence intervals of GU on training results from AA, GP, and baselines on 4 MAMO environments.}
    \label{SFC}
\end{figure*}


\subsection{Performance Comparison}

In this experiment, each agent’s preference is modeled as a linear function of the observation. The mapping function is agent-specific but kept consistent across all rounds and baselines to ensure fairness. Detailed hyperparameter settings are presented in Appendix D.

Table~\ref{tab:ip-hv} presents the performance across 10 training seeds, reported as the mean and standard deviation of GU and HV. AA consistently achieves the best and most stable performance across most environments. Although global preference performs comparably to AA in certain environments (\textit{Walker}, \textit{Push}, and \textit{Reference}), its lack of consensus among agents undermines its robustness in settings with conflicting individual preferences (\textit{Catch}). 
GPIPD is designed for single-agent multi-objective settings. As a result, it shows poor performance and occasionally fails to converge. This demonstrates the limitation of single-agent frameworks in modeling the multi-agent policy-preference space. MOMIX is designed for the team preference setting and discrete action space. It lacks the generalisability for individual reward and preference settings. Consequently, MO-MIX shows its shortcomings in generalisation to complex reward and preference settings.

Learning curves in Fig.~\ref{SFC} show that in the most challenging environment, \textit{Multi-Walker}, only AA and GP successfully acquire meaningful policies, whereas all other baselines fail to progress. Similar trends are observed in the MOMA particle environments, where AA and global preference demonstrate the most stable learning dynamics, while others struggle with convergence and exhibit high variance.

\subsection{Ablation Study}

The comparison among AA, global preference (GP), MADDPG, and IP serves as the ablation study, aimed at evaluating the impact of vectorised action value, global preference, and AA mechanism on MAMO learning.

The struggles of MADDPG highlight the importance of vectorised representations of rewards, action-values, and preferences in multi-objective settings. Scalarised rewards might be effective when preferences are fixed and aligned between training and execution, but they become impractical in real-world scenarios where preferences evolve dynamically. As shown in our dynamic preference setting, scalarised reward methods struggle to generalise and fail to extract meaningful policies. The poor performance of IP provides empirical support for Theorem~1. Without direct access or structured modeling to the global preference, agents cannot form accurate beliefs over others’ policies, making BNE unattainable. Consequently, learning becomes unstable and ineffective. AA even outperforms global preference in several environments. This result demonstrates that how heterogeneous and dynamic utilities influence inter-agent coordination. By explicitly modeling these relational factors, AA enables each agent to update its own policy, adapting to other policies more effectively, resulting in improved performance and convergence towards the BNE.

\section{Conclusion}
In this paper, we mathematically prove that direct access to or structured modeling of global preferences during decision-making is a necessary condition for achieving BNE in MAMOS. For the case where preferences are randomly generated, we incorporate global preferences into distributed decision-making and design a corresponding MAMORL framework. For the more realistic setting where preferences are generated by agents based on their own observations, we develop an AA-MAMORL framework in which a centralised attention-based critic network is employed to model inter-agent influences and preference-policy dependencies, and is shared among all agents. To evaluate AA and global preference, we constructed experiments using 9 standard MAMO envs. The results demonstrate that the proposed AA-MAMORL consistently outperforms baselines across diverse environments by effectively modeling heterogeneous preferences and coordinating decentralised policies. Ablation results highlight the necessity of global preference modeling and vectorised objectives for stable and convergent learning in multi-agent multi-objective settings.

\bibliography{aaai2026}

@article{he2021secure,
  title={{Secure control of multiagent systems against malicious attacks: A brief survey}},
  author={He, Wangli and Xu, Wenying and Ge, Xiaohua and Han, Qing-Long and Du, Wenli and Qian, Feng},
  journal={IEEE Transactions on Industrial Informatics},
  volume={18},
  number={6},
  pages={3595--3608},
  year={2021},
  publisher={IEEE}
}

@article{wong2023deep,
  title={{Deep multiagent reinforcement learning: Challenges and directions}},
  author={Wong, Annie and B{\"a}ck, Thomas and Kononova, Anna V and Plaat, Aske},
  journal={Artificial Intelligence Review},
  volume={56},
  number={6},
  pages={5023--5056},
  year={2023},
  publisher={Springer}
}

@article{wei2021resource,
  title={{Resource allocation for simultaneous wireless information and power transfer systems: A tutorial overview}},
  author={Wei, Zhiqiang and Yu, Xianghao and Ng, Derrick Wing Kwan and Schober, Robert},
  journal={Proceedings of the IEEE},
  volume={110},
  number={1},
  pages={127--149},
  year={2021},
  publisher={IEEE}
}

@article{qi2022integrating,
  title={{Integrating sensing, computing, and communication in 6G wireless networks: Design and optimization}},
  author={Qi, Qiao and Chen, Xiaoming and Khalili, Ata and Zhong, Caijun and Zhang, Zhaoyang and Ng, Derrick Wing Kwan},
  journal={IEEE Transactions on Communications},
  volume={70},
  number={9},
  pages={6212--6227},
  year={2022},
  publisher={IEEE}
}

@article{niu2023active,
  title={{Active RIS assisted rate-splitting multiple access network: Spectral and energy efficiency tradeoff}},
  author={Niu, Hehao and Lin, Zhi and An, Kang and Wang, Jiangzhou and Zheng, Gan and Al-Dhahir, Naofal and Wong, Kai-Kit},
  journal={IEEE Journal on Selected Areas in communications},
  volume={41},
  number={5},
  pages={1452--1467},
  year={2023},
  publisher={IEEE}
}

@article{shi2021multi,
  title={{Multi-objective tradeoff optimization of predictive adaptive cruising control for autonomous electric buses: A cyber-physical-energy system approach}},
  author={Shi, Man and He, Hongwen and Li, Jianwei and Han, Mo and Jia, Chunchun},
  journal={Applied Energy},
  volume={300},
  pages={117385},
  year={2021},
  publisher={Elsevier}
}

@article{hayat2020multi,
  title={{Multi-objective drone path planning for search and rescue with quality-of-service requirements}},
  author={Hayat, Samira and Yanmaz, Ev{\c{s}}en and Bettstetter, Christian and Brown, Timothy X},
  journal={Autonomous Robots},
  volume={44},
  number={7},
  pages={1183--1198},
  year={2020},
  publisher={Springer}
}

@article{yang2019generalized,
  title={{A generalized algorithm for multi-objective reinforcement learning and policy adaptation}},
  author={Yang, Runzhe and Sun, Xingyuan and Narasimhan, Karthik},
  journal={Advances in neural information processing systems},
  volume={32},
  year={2019}
}

@article{lowe2017multi,
  title={{Multi-agent actor-critic for mixed cooperative-competitive environments}},
  author={Lowe, Ryan and Wu, Yi I and Tamar, Aviv and Harb, Jean and Pieter Abbeel, OpenAI and Mordatch, Igor},
  journal={Advances in neural information processing systems},
  volume={30},
  year={2017}
}

@article{hu2023mo,
  title={MO-MIX: Multi-objective multi-agent cooperative decision-making with deep reinforcement learning},
  author={Hu, Tianmeng and Luo, Biao and Yang, Chunhua and Huang, Tingwen},
  journal={IEEE Transactions on Pattern Analysis and Machine Intelligence},
  volume={45},
  number={10},
  pages={12098--12112},
  year={2023},
  publisher={IEEE}
}

@article{jiang2023research,
  title={Research on pavement traffic load state perception based on the piezoelectric effect},
  author={Jiang, Wei and Li, Pengfei and Sha, Aimin and Li, Yupeng and Yuan, Dongdong and Xiao, Jingjing and Xing, Chengwei},
  journal={IEEE Transactions on Intelligent Transportation Systems},
  volume={24},
  number={8},
  pages={8264--8278},
  year={2023},
  publisher={IEEE}
}

@article{chang2023does,
  title={Does financial inclusion index and energy performance index co-move?},
  author={Chang, Lei and Iqbal, Sajid and Chen, Huangen},
  journal={Energy Policy},
  volume={174},
  pages={113422},
  year={2023},
  publisher={Elsevier}
}

@article{mallareddy2023maximizing,
  title={Maximizing water use efficiency in rice farming: A comprehensive review of innovative irrigation management technologies},
  author={Mallareddy, Maduri and Thirumalaikumar, Ramasamy and Balasubramanian, Padmaanaban and Naseeruddin, Ramapuram and Nithya, Narayanaswamy and Mariadoss, Arulanandam and Eazhilkrishna, Narayanasamy and Choudhary, Anil Kumar and Deiveegan, Murugesan and Subramanian, Elangovan and others},
  journal={Water},
  volume={15},
  number={10},
  pages={1802},
  year={2023},
  publisher={MDPI}
}

@inproceedings{paine2024model,
  title={A model for multi-agent autonomy that uses opinion dynamics and multi-objective behavior optimization},
  author={Paine, Tyler M and Benjamin, Michael R},
  booktitle={2024 IEEE International Conference on Robotics and Automation (ICRA)},
  pages={8305--8311},
  year={2024},
  organization={IEEE}
}

@article{alegre2023sample,
  title={Sample-efficient multi-objective learning via generalized policy improvement prioritization},
  author={Alegre, Lucas N and Bazzan, Ana LC and Roijers, Diederik M and Now{\'e}, Ann and da Silva, Bruno C},
  journal={arXiv preprint arXiv:2301.07784},
  year={2023}
}

@inproceedings{zitzler2007hypervolume,
  title={The hypervolume indicator revisited: On the design of Pareto-compliant indicators via weighted integration},
  author={Zitzler, Eckart and Brockhoff, Dimo and Thiele, Lothar},
  booktitle={Evolutionary Multi-Criterion Optimization: 4th International Conference, EMO 2007, Matsushima, Japan, March 5-8, 2007. Proceedings 4},
  pages={862--876},
  year={2007},
  organization={Springer}
}

@article{ruadulescu2020multi,
  title={Multi-objective multi-agent decision making: a utility-based analysis and survey},
  author={R{\u{a}}dulescu, Roxana and Mannion, Patrick and Roijers, Diederik M and Now{\'e}, Ann},
  journal={Autonomous Agents and Multi-Agent Systems},
  volume={34},
  number={1},
  pages={10},
  year={2020},
  publisher={Springer}
}

@article{han2024demystify,
  title={Demystify mamba in vision: A linear attention perspective},
  author={Han, Dongchen and Wang, Ziyi and Xia, Zhuofan and Han, Yizeng and Pu, Yifan and Ge, Chunjiang and Song, Jun and Song, Shiji and Zheng, Bo and Huang, Gao},
  journal={Advances in neural information processing systems},
  volume={37},
  pages={127181--127203},
  year={2024}
}

@article{felten2024momaland,
  title={Momaland: A set of benchmarks for multi-objective multi-agent reinforcement learning},
  author={Felten, Florian and Ucak, Umut and Azmani, Hicham and Peng, Gao and R{\"o}pke, Willem and Baier, Hendrik and Mannion, Patrick and Roijers, Diederik M and Terry, Jordan K and Talbi, El-Ghazali and others},
  journal={arXiv preprint arXiv:2407.16312},
  year={2024}
}

@article{zhang2024multi,
  title={Multi-objective aerial collaborative secure communication optimization via generative diffusion model-enabled deep reinforcement learning},
  author={Zhang, Chuang and Sun, Geng and Li, Jiahui and Wu, Qingqing and Wang, Jiacheng and Niyato, Dusit and Liu, Yuanwei},
  journal={IEEE Transactions on Mobile Computing},
  year={2024},
  publisher={IEEE}
}

@article{assos2024maximizing,
  title={Maximizing utility in multi-agent environments by anticipating the behavior of other learners},
  author={Assos, Angelos and Dagan, Yuval and Daskalakis, Constantinos},
  journal={Advances in Neural Information Processing Systems},
  volume={37},
  pages={38769--38798},
  year={2024}
}

@incollection{saglam2025bayesian,
  title={Bayesian Nash Equilibrium},
  author={Saglam, Ismail},
  booktitle={Mastering Game Theory: A Comprehensive Introduction to Strategic Decision Making},
  pages={101--112},
  year={2025},
  publisher={Springer}
}

@inproceedings{li2025gtde,
  title={Gtde: Grouped training with decentralized execution for multi-agent actor-critic},
  author={Li, Mengxian and Wang, Qi and Xu, Yongjun},
  booktitle={Proceedings of the AAAI Conference on Artificial Intelligence},
  volume={39},
  number={17},
  pages={18368--18376},
  year={2025}
}
\clearpage
\appendix
\onecolumn
\section*{Appendix of Achieving Equilibrium under Utility Heterogeneity: An Agent-Attention Framework for Multi-Agent Multi-Objective Reinforcement Learning}

\subsection*{Appendix A: The Necessity of Global Preferences in POMOMDP Decision‐Making}
\subsubsection*{Bayesian Game Formulation:}
We consider a Bayesian game 
\begin{equation}
\mathcal{G}_B 
= 
\langle 
\mathcal{N}, 
\{\mathcal{X}_i\}_{i\in\mathcal{N}},
\{\Theta_i\}_{i\in\mathcal{N}},
\{u_i\}_{i\in\mathcal{N}},
\{\mu_i\}_{i\in\mathcal{N}}
\rangle,
\end{equation}
where:

\begin{itemize}
  \item $\mathcal{N} = \{1,\ldots,N\}$ denotes the set of agents.
  \item $\mathcal{X}_i$ is the (possibly continuous) action space of agent $i$, and $\mathcal{X} = \prod_{i\in\mathcal{N}} \mathcal{X}_i$ is the joint action space.
  \item $\Theta_i$ is the type space of agent $i$, where each type $\theta_i = (o_i, \bm{w}_i)$ consists of the observation $o_i$ and the individual preference vector $\bm{w}_i$.
  \item The {utility function} of agent $i$ is defined as
\begin{equation}
  u_i(x_i, x_{-i}, \theta_i)
  = 
  \bm{w}_i^{\top}\bm{f}_i(x_i, x_{-i}),
\end{equation}
  where $\bm{f}_i(\cdot)$ is the vectorised reward feedback obtained from the joint action profile $(x_i, x_{-i}) \in \mathcal{X}$.
  \item $\mu_i(\theta_{-i} \mid \theta_i)$ represents agent $i$'s {belief} about the other agents' types $\theta_{-i}$, conditional on its own type $\theta_i$.
\end{itemize}

Each agent $i$ selects a measurable strategy 
\begin{equation}
s_i : \Theta_i \rightarrow \mathcal{X}_i,
\end{equation}
which specifies its action for every possible type.

\noindent
A {Bayesian Nash Equilibrium (BNE)} is a strategy profile 
$s^* = (s_1^*, \ldots, s_N^*)$ 
such that, for all $i \in \mathcal{N}$ and all $\theta_i \in \Theta_i$,
\begin{equation}
\begin{split}
    \mathbb{E}_{\theta_{-i}\sim \mu_i(\cdot \mid \theta_i)}
\!\left[
u_i(s_i^*(\theta_i), s_{-i}^*(\theta_{-i}), \theta_i)
\right]
\ge
\mathbb{E}_{\theta_{-i}\sim \mu_i(\cdot \mid \theta_i)}
\! \\ \left[
u_i(a_i, s_{-i}^*(\theta_{-i}), \theta_i)
\right],
\quad
\forall a_i \in \mathcal{X}_i.  
\end{split}
\end{equation}

\subsubsection*{Case I: Preferences as Unstructured Random Variables:}

Assume that each agent's preference $\bm{w}_i$ is independently and uniformly distributed over the simplex $\Delta^k$, i.e., $\bm{w}_i \sim \mathrm{Unif}(\Delta^k)$.

\begin{theorem}[BNE Inapplicability with Unobservable, Uniform Preferences]
\label{thm:inapplicability}
Suppose that for any $i\neq j$, agent $i$ knows only that $\bm w_j\sim\mathrm{Unif}(\Delta^k)$ and receives no informative signal about $\bm w_j$. Then, the classical Bayesian Nash Equilibrium (BNE) concept is \emph{inapplicable}: the best–response correspondence cannot be properly defined because the conditional expectations required for expected utility are not well-posed under agent $i$'s information structure.
\end{theorem}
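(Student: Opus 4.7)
The plan is to trace the BNE formulation back to the conditional expectation that defines each agent's best-response correspondence, and to show that under the information structure imposed by Case I, this conditional expectation is not a function of agent $i$'s information alone, so the correspondence is not well-defined and no Harsanyi-style fixed point exists to characterise.

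I would start by writing the best-response condition explicitly. Fixing any candidate profile $s_{-i}^{*}$, the expected utility of agent $i$ of type $\theta_i=(o_i,\bm{w}_i)$ playing $a_i$ is
\[
U_i(a_i;\theta_i)=\bm{w}_i^{\top}\,\mathbb{E}_{\theta_{-i}\sim\mu_i(\cdot\mid\theta_i)}\!\bigl[\bm{f}_i\bigl(a_i,\,s_{-i}^{*}(\theta_{-i})\bigr)\bigr],
\]
and a valid best-response map $\mathrm{BR}_i$ requires this quantity to be a measurable function of $\theta_i$ alone. Under the hypothesis, agent $i$ receives no signal on $\bm{w}_{-i}$, so the posterior $\mu_i(\bm{w}_{-i}\mid\theta_i)$ degenerates to the uninformative prior $\mathrm{Unif}(\Delta^k)^{N-1}$, and the integrand still depends on $s_{-i}^{*}$ evaluated at the unobserved coordinate $\bm{w}_{-i}$ while $s_{-i}^{*}$ is the very object the equilibrium fixed point is supposed to determine.

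The core step is to expose a non-identifiability in the integrand. Because the strategy space of each opponent $j$ is the set of measurable maps $s_j:\Theta_j\to\mathcal{X}_j$ defined on the continuum $\Delta^k\times O_j$, and nothing couples $\bm{w}_j$ to any observable quantity available to $i$, I would construct two admissible profiles $s_{-i}^{*}$ and $\tilde{s}_{-i}^{*}$ that agree $\mu_i$-almost surely on the observation coordinates but differ on the $\bm{w}_{-i}$ coordinate. Using a continuity argument on $\bm{f}_i$ I would show these yield distinct values of the above expectation and hence different candidate actions $\mathrm{BR}_i(\theta_i)$, establishing that $U_i(a_i;\theta_i)$ is not pinned down by $\theta_i$ but by infinite-dimensional, unidentifiable features of the opponents' type-to-action maps.

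The argument then closes by returning to the Harsanyi reduction: a BNE requires that, conditional on its own type, each agent solves a well-posed single-agent decision problem whose solution is itself a measurable function of the type. The preceding non-identifiability breaks this reduction, so $\mathrm{BR}_i$ does not exist as a correspondence on $\Theta_i$, and no Kakutani- or Glicksberg-type fixed-point argument can be invoked for the profile $(s_1^{*},\dots,s_N^{*})$. The main obstacle I anticipate is the explicit construction of the two indistinguishable but behaviourally inequivalent opponent profiles: this requires a careful measure-theoretic argument on the unobserved $\bm{w}_{-i}$ coordinates together with a mild regularity condition on $\bm{f}_i$ (continuity plus boundedness should suffice), rather than any quantitative estimate.
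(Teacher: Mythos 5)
Your core step does not deliver the theorem's conclusion. You propose to construct two admissible opponent profiles $s_{-i}^{*}$ and $\tilde s_{-i}^{*}$ that agree ($\mu_i$-a.s.) on the observation coordinates but differ in how they depend on the unobserved $\bm w_{-i}$, and to show that they induce different values of $\mathbb{E}_{\theta_{-i}\sim\mu_i(\cdot\mid\theta_i)}\bigl[\bm f_i\bigl(a_i, s_{-i}(\theta_{-i})\bigr)\bigr]$ and hence different candidate best responses. In the classical Harsanyi framework this is not a pathology: the best-response correspondence is by definition a map $BR_i(s_{-i}\mid\theta_i)$ that takes the opponents' strategy profile as an argument, and for any \emph{fixed} measurable $s_{-i}$ together with the known uniform prior, the expectation above is a perfectly well-defined integral, so the induced single-agent problem is well-posed. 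Showing that two different profiles $s_{-i}$ yield two different best responses only shows that $BR_i$ depends on its argument, which is true in every game and is exactly what the fixed-point formulation accommodates; it does not show that ``the conditional expectations required for expected utility are not well-posed,'' which is the claim the theorem makes. So the non-identifiability construction, even if carried out with the measure-theoretic care you anticipate, would not prove the statement.

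The paper's proof locates the failure elsewhere. After the same two observations you make (the posterior over $\bm w_{-i}$ collapses to the uninformative prior $\mathrm{Unif}(\Delta^k)^{\otimes(N-1)}$, and the integrand involves $s_{-i}$ evaluated at coordinates agent $i$ never observes), it argues that the mapping $s_{-i}:\Theta_{-i}\to\Delta(\mathcal X_{-i})$ is not measurable with respect to the $\sigma$-algebra generated by $\theta_i$, so the integrand $\bm f_i(\alpha_i, s_{-i}(\theta_{-i}))$ fails to be measurable with respect to $\sigma(\theta_i)\otimes\mathcal B(\Theta_{-i})$ and the Bochner integral defining $U_i(\alpha_i,s_{-i}\mid\theta_i)$ does not exist; only then does it conclude that $BR_i$, which must be measurable, convex-valued and upper hemicontinuous for the Kakutani/Glicksberg machinery, cannot be constructed, rendering BNE inapplicable. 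Your write-up never engages with this measurability/integrability step, and the sensitivity argument you substitute for it cannot close that gap: to match the theorem you must argue that the expectation itself is ill-posed for a given candidate $s_{-i}$ under agent $i$'s information structure, not that the best response varies with the choice of $s_{-i}$.
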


\begin{proof}
Fix agent $i \in \mathcal{N}$. Its type is $\theta_i=(o_i,\bm w_i)\in\Theta_i$, known to itself, and let $\theta_{-i}=(o_{-i},\bm w_{-i})\in\Theta_{-i}$ denote the types of the other agents.

In the standard BNE framework, each agent adopts a behavioral strategy
\begin{equation}
s_i:\Theta_i \rightarrow \Delta(\mathcal{X}_i),
\end{equation}
and evaluates the expected utility of a mixed action $\alpha_i \in \Delta(\mathcal{X}_i)$ given opponents’ strategies $s_{-i}$ as:
\begin{equation}
\label{eq:EU}
\begin{split}
U_i(\alpha_i, s_{-i}\mid \theta_i)
&=\mathbb{E}_{\theta_{-i}\sim\mu_i(\cdot\mid\theta_i)}
\Bigl[
\bm w_i^\top \bm f_i(\alpha_i, s_{-i}(\theta_{-i}))
\Bigr].
\end{split}
\end{equation}

Here, $\mu_i(\cdot\mid\theta_i)$ denotes agent $i$’s posterior belief about $\theta_{-i}$, and $\bm f_i(\cdot)$ represents the vectorised reward function.

Under the assumption that (i) $\bm w_j$ is unobservable to $i$, and (ii) $i$ receives no informative signal about $\bm w_{-i}$, Bayes’ rule yields the \emph{uninformative posterior}:
\begin{equation}
\mu_i(\bm w_{-i}\mid \theta_i)
=\mathrm{Unif}(\Delta^k)^{\otimes (N-1)},
\end{equation}
independent of $\theta_i$. Although the marginal distribution of $\bm w_{-i}$ is known, the mapping $s_{-i}:\Theta_{-i}\to\Delta(\mathcal{X}_{-i})$ is not measurable with respect to the $\sigma$-algebra generated by $\theta_i$. Consequently, $i$ has no well-defined information basis to form beliefs about the random variable $s_{-i}(\theta_{-i})$.

As a result, the integrand in \eqref{eq:EU},
\begin{equation}
\bm f_i(\alpha_i, s_{-i}(\theta_{-i})),
\end{equation}
is not measurable with respect to $\sigma(\theta_i)\otimes\mathcal{B}(\Theta_{-i})$, and hence the Bochner integral defining $U_i(\alpha_i, s_{-i}\mid \theta_i)$ does not exist. Without a well-defined expected utility, the best–response correspondence
\begin{equation}
BR_i(s_{-i}\mid \theta_i)
=\arg\max_{\alpha_i\in\Delta(\mathcal{X}_i)}
U_i(\alpha_i, s_{-i}\mid \theta_i)
\end{equation}
cannot be constructed. Since the classical BNE definition requires each $BR_i$ to be well-defined (measurable, convex-valued, and upper hemicontinuous), the notion of BNE itself becomes inapplicable under this unobservable, unstructured preference scenario.

\end{proof}

\subsubsection*{Case II: Preferences as Observable Random Variables:}

\begin{theorem}[BNE Attainability with Observable Uniform Preferences]
\label{thm:structured-observable}
Let each agent’s preference $\bm w_j \sim \mathrm{Unif}(\Delta^k)$ be drawn independently, and suppose that for all $i\neq j$, agent $i$ can observe $\bm w_j$ before choosing its action. Then, under standard compactness and continuity assumptions, a Bayesian Nash equilibrium in behavioral strategies exists.
\end{theorem}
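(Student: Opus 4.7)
The plan is to show that, because preferences are now mutually observable, the obstruction identified in Theorem~\ref{thm:inapplicability} disappears and the classical existence machinery for BNE (Glicksberg's theorem, or Milgrom--Weber's distributional-strategies result) applies directly. First I would enlarge the information available to agent $i$ from $\theta_i=(o_i,\bm w_i)$ to $\tilde\theta_i=(o_i,\bm w_1,\ldots,\bm w_N)$, since by hypothesis $\bm w_{-i}$ is observed before action choice. A behavioural strategy is then a measurable kernel $s_i:\tilde\Theta_i\to\Delta(\mathcal{X}_i)$, and crucially the opponent action $s_{-i}(\tilde\theta_{-i})$ is now measurable with respect to the $\sigma$-algebra generated by $\tilde\theta_i$ (since $\bm w_{-i}$ is common knowledge once observed). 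The posterior $\mu_i(\tilde\theta_{-i}\mid\tilde\theta_i)$ reduces to the conditional law of $o_{-i}$ given $o_i$ times a Dirac on the already-observed $\bm w_{-i}$, which is a well-defined regular conditional probability.

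Second I would check that the expected utility
\begin{equation}
U_i(\alpha_i,s_{-i}\mid\tilde\theta_i)
=\int \bm w_i^\top \bm f_i\bigl(\alpha_i,s_{-i}(\tilde\theta_{-i})\bigr)\,
d\mu_i(\tilde\theta_{-i}\mid\tilde\theta_i)
\end{equation}
is now a well-defined Bochner integral. Measurability of the integrand follows from the measurability of $s_{-i}$ in its (now admissible) argument and continuity of $\bm f_i$; boundedness follows from compactness of $\mathcal{X}=\prod_i\mathcal{X}_i$ together with $\bm w_i\in\Delta^k$. Hence the best-response correspondence
\begin{equation}
BR_i(s_{-i}\mid\tilde\theta_i)
=\arg\max_{\alpha_i\in\Delta(\mathcal{X}_i)} U_i(\alpha_i,s_{-i}\mid\tilde\theta_i)
\end{equation}
is non-empty and, by linearity of $U_i$ in $\alpha_i$, convex-valued.

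Third I would apply a fixed-point theorem to the joint best-response map. Equip the space of behavioural strategies $\Sigma_i$ with the topology of weak convergence of transition kernels: under the compactness--continuity hypotheses stated in the theorem, $\Sigma_i$ is non-empty, convex, and compact in a locally convex Hausdorff topological vector space, and $U_i$ is jointly continuous and linear in each coordinate. The Berge maximum theorem then yields that $BR_i$ is upper hemicontinuous with non-empty convex values, and Glicksberg's fixed-point theorem applied to $BR=\prod_i BR_i$ produces a profile $s^*=(s_1^*,\ldots,s_N^*)$ satisfying the BNE inequality for $\mu_i$-almost every $\tilde\theta_i$; a standard selection argument upgrades this to every $\tilde\theta_i$.

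The hard part will be the topological setup rather than the algebra: one must pick a topology on the kernel space $\Sigma_i$ under which (i) $\Sigma_i$ is compact, (ii) expected payoffs are continuous, and (iii) the best-response correspondence remains upper hemicontinuous. For finite $\mathcal{X}_i$ this is immediate, but for continuous action spaces the cleanest route is to recast strategies as joint distributions on $\tilde\Theta_i\times\mathcal{X}_i$ with fixed marginal on $\tilde\Theta_i$ (Milgrom--Weber's distributional strategies), where the weak-$*$ topology delivers all three properties provided $\mu_i(\cdot\mid\tilde\theta_i)$ is absolutely continuous in $\tilde\theta_i$, a condition that is satisfied here because $\bm w_j\sim\mathrm{Unif}(\Delta^k)$ admits a density. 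Modulo this measure-theoretic check, the existence argument is standard.
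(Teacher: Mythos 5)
Your proposal is correct and follows essentially the same route as the paper's proof: observability of $\bm W$ removes the measurability obstruction so that expected utility is a well-defined, continuous functional, the behavioral-strategy spaces are compact and convex under the weak$^{*}$ topology, and Berge's maximum theorem plus a Kakutani/Glicksberg fixed point yields the equilibrium. Your extra care about the kernel topology and the Milgrom--Weber distributional-strategy recasting is a more explicit treatment of a step the paper leaves implicit, but it is the same argument, not a different one.
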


\begin{proof}
When all preferences $\bm W=\{\bm w_j\}_{j=1}^N$ are common knowledge, each agent $i$’s only private information is its local observation $o_i$. The belief function thus reduces to:
\begin{equation}
\mu_i(o_{-i}\mid o_i, \bm W)=\mu_O(o_{-i}\mid o_i).
\end{equation}
The game then becomes a standard incomplete-information game where each agent’s type is $o_i$, and its utility is
\begin{equation}
u_i(x_i, x_{-i}; o_i, \bm W)
=\bm w_i^\top \bm f_i(x_i, x_{-i}).
\end{equation}
Under the assumptions that each $\mathcal{X}_i$ is compact and convex, and $\bm f_i$ is continuous and bounded, the expected payoff
\begin{equation}
\begin{split}
    U_i(\alpha_i, s_{-i}\mid o_i, \bm W)
=\mathbb{E}_{o_{-i}\sim\mu_O(\cdot\mid o_i)}\Bigl[
\mathbb{E}_{x_i\sim\alpha_i, x_{-i}\sim s_{-i}(o_{-i})} \\
[\bm w_i^\top\bm f_i(x_i, x_{-i})]
\Bigr]
\end{split}
\end{equation}
is continuous in both $\alpha_i$ and $s_{-i}$.

Each agent’s behavioral strategy $s_i:O_i\to\Delta(\mathcal{X}_i)$ defines a compact and convex strategy space $S_i$ under the weak$^*$ topology. Standard results (Berge’s maximum theorem and Kakutani/Glicksberg fixed-point theorem) imply that the aggregate best–response correspondence
\begin{equation}
BR:S\rightrightarrows S, \qquad BR(s)=\prod_i BR_i(s_{-i})
\end{equation}
is nonempty, convex-valued, and upper hemicontinuous. Hence, there exists a fixed point $s^*\in S$ such that $s^*\in BR(s^*)$, which is a Bayes–Nash equilibrium.
\end{proof}

\subsubsection*{Case III: Preferences as State-dependent Functions:}

\begin{theorem}[BNE Existence under State-dependent Preferences]
\label{thm:obs-dependent-preferences}
Suppose each agent’s preference weight $\bm w_i$ is a deterministic continuous function of its private observation, $\bm w_i=g(o_i)$ with $g:O_i\to\Delta^k$ continuous. Then, under compactness and continuity of $\mathcal{X}_i$ and $\bm f_i$, a mixed-strategy Bayesian Nash equilibrium exists.
\end{theorem}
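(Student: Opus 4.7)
The plan is to reduce this case to the framework of Theorem~\ref{thm:structured-observable} by exploiting the deterministic map $\bm w_i = g(o_i)$, and then invoke a standard Glicksberg-type fixed-point argument. First I would redefine the Bayesian game $\mathcal{G}_B$ so that the type of agent $i$ is simply its observation $o_i \in O_i$ (dropping $\bm w_i$ from the type, since it is pinned down by $o_i$). The induced utility
\[
\tilde u_i(x_i, x_{-i}; o_i) \;=\; g(o_i)^{\top}\, \bm f_i(x_i, x_{-i})
\]
is continuous jointly in $(o_i, x_i, x_{-i})$: continuity in $o_i$ follows from continuity of $g$, continuity in $(x_i, x_{-i})$ follows from continuity of $\bm f_i$, and the inner product is bilinear and continuous. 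Agent $i$'s belief $\mu_i(o_{-i}\mid o_i)$ over opponents' observations induces, via the pushforward under $g^{\otimes(N-1)}$, a well-defined belief over opponents' preferences $\bm w_{-i}$, so no additional informational assumption is needed.

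Next I would set up the space of behavioral strategies $s_i : O_i \to \Delta(\mathcal{X}_i)$. Under the assumed compactness of $O_i$ and $\mathcal{X}_i$, each space $\Delta(\mathcal{X}_i)$ is compact and convex in the weak$^\ast$ topology, and the space $S_i$ of measurable maps $s_i$ (identified with transition kernels from $O_i$ to $\mathcal{X}_i$) is compact and convex under the appropriate weak topology. The expected utility
\[
U_i(\alpha_i, s_{-i}\mid o_i) \;=\; \mathbb{E}_{o_{-i}\sim \mu_i(\cdot\mid o_i)}\Bigl[\,\mathbb{E}_{x_i\sim\alpha_i,\,x_{-i}\sim s_{-i}(o_{-i})}[\,g(o_i)^{\top}\bm f_i(x_i,x_{-i})\,]\,\Bigr]
\]
is well-defined (the integrand is bounded and measurable by continuity of $g$ and $\bm f_i$) and is linear in $\alpha_i$ and continuous in $s_{-i}$.

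Then I would apply Berge's maximum theorem to conclude that the best-response correspondence
\[
BR_i(s_{-i}\mid o_i) \;=\; \arg\max_{\alpha_i \in \Delta(\mathcal{X}_i)} U_i(\alpha_i, s_{-i}\mid o_i)
\]
is nonempty, convex-valued, and upper hemicontinuous. Aggregating pointwise over $o_i$ yields a correspondence $BR_i : S_{-i} \rightrightarrows S_i$ with the same properties, and hence the joint correspondence $BR = \prod_i BR_i : S \rightrightarrows S$ satisfies the hypotheses of the Kakutani--Fan--Glicksberg fixed-point theorem on the compact convex set $S = \prod_i S_i$. A fixed point $s^\ast \in BR(s^\ast)$ is precisely a mixed-strategy BNE of the reduced game, which by construction is also a BNE of the original game under observation-dependent preferences.

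The main technical obstacle I anticipate is the measurability and topology on the strategy space $S_i$: one must choose a topology (e.g., the weak topology on transition kernels induced by continuous bounded test functions on $O_i \times \mathcal{X}_i$) under which (i) $S_i$ is compact and convex, and (ii) $U_i$ is jointly continuous in $(\alpha_i, s_{-i})$ so that Berge's theorem applies pointwise and upper hemicontinuity lifts to the kernel level. Once this topological setup is fixed, the continuity of $g$ and $\bm f_i$ together with compactness of $O_i$ and $\mathcal{X}_i$ make the remaining verification routine, essentially mirroring the argument used in Theorem~\ref{thm:structured-observable}.
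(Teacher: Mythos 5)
Your proposal is correct and follows essentially the same route as the paper: collapse the type to $o_i$ via $\bm w_i = g(o_i)$, observe that $g(o_i)^{\top}\bm f_i(x_i,x_{-i})$ is jointly continuous and bounded, and then invoke the same behavioral-strategy compactness, Berge maximum theorem, and Kakutani--Glicksberg fixed-point argument used for Theorem~\ref{thm:structured-observable}. Your additional remarks on the pushforward of beliefs under $g$ and on the weak topology for transition kernels simply make explicit details the paper leaves implicit.
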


\begin{proof}
Under $\bm w_i=g(o_i)$, each agent’s private type simplifies to $\theta_i=o_i$. The utility of agent $i$ becomes:
\begin{equation}
u_i(x_i, x_{-i}; o_i)=g(o_i)^\top\bm f_i(x_i, x_{-i}),
\end{equation}
which is continuous in $(x_i,x_{-i},o_i)$ and bounded. The existence of BNE then follows directly from the same distributional strategy and fixed-point arguments as in Theorem~\ref{thm:structured-observable}.
\end{proof}

\subsection*{Appendix B: Pseudocode for Global preference and Agent Attention MAMORL}
\subsubsection*{Global-Preference-Based Multi-Agent Multi-Objective Reinforcement Learning:}

In Global-preference-based MAMORL, the input includes the global preference distribution set \(\bm{\Psi}_{1..N}\) for all agents, the soft update coefficient \(\lambda\), and a set of agent-specific preference generators \(PG_{1..N}[o_{1..N}]\) (Line~1). Each agent \(i\) initialises its critic network \(\bm{Q}_i^\mu(s,a_1,...,a_N,\bm{W}|\theta^{\bm{Q}_i})\) and actor network \(\mu_i(o_i,\bm{W}|\theta^{\mu_i})\), where both networks condition on the global preference \(\bm{W}\) (Line~2). The parameters of the target actor and critic networks are copied from their respective networks to support stable learning through soft updates (Line~3). A shared replay buffer \(R\) is also initialised to store experience tuples (Line~4).

For each episode, the environment state \(s[0]\) is initialised(Lines~6). During each timestep \(t\), every agent \(i\) observes its local state \(o_i[t]\) based on the global state \(s[t]\), and generates a preference vector \(\bm{w}_i[t]\) via its preference generator \(PG_i\), which is stored in the global preference set \(\bm{W}[t]\) (Lines~8--11).

Then, each agent selects an action with the policy \(\mu_i(o_i[t], \bm{W}[t])\)  incorporated with the adjustable noise $\mathcal{N}$ (Lines~12--14). The environment returns a vectorised reward \(\bm{r}[t]\) and the next state \(s[t+1]\) after all actions have been executed, and the full transition \(\{s[t], \bm{a}[t], \bm{r}[t], s[t+1], \bm{W}[t]\}\) is appended to the replay buffer \(R\) (Lines~15--16).

During training, if the update condition is met, a mini-batch of \(N_{\tau}\) transitions is uniformly sampled from the buffer \(R\) (Line~18). For each sampled transition \(\{s[k], \bm{a}[k], \bm{r}[k], s[k+1], \bm{W}[k]\}\), every agent \(i\) updates its critic parameters \(\theta^{\bm{Q}_i}\) by minimising the multi-objective temporal-difference error (MOTDE) loss (Line~20), and updates its actor parameters \(\theta^{\mu_i}\) by applying the gradient of the MAMODDPG loss (Line~22).

To ensure stable training, soft updates are performed on the target critic and actor networks using the soft coefficient \(\lambda\): (Lines~25--28). Additionally, the noise $\mathcal{N}$ is reduced as the training progresses (Lines~29). This learning process is repeated over episodes until convergence.
\subsubsection*{Agent-Attention-Based Multi-Agent Multi-Objective Reinforcement Learning:}

In the proposed Agent Attention MAMORL framework, the input includes the global preference distribution set \(\bm{\Psi}_{1..N}\), the soft update parameter \(\lambda\), and the preference generator \(PG_{1..N}[o_{1..N}]\) for each agent (Line~1). Each agent \(i\) initialises: (i) an embedding network \({emb}_i^\mu(o_i,a_i,\bm{w}_i|\theta^{{emb}_i})\), which encodes observation, action, and preference vectors into a latent representation; (ii) an output network \({out}_i^\mu(\tilde{h_i^S}|\theta^{{out}_i})\), which maps the attention-interacted hidden state to vectorised Q-values; (iii) an actor policy network \(\mu_i(o_i|\theta^{\mu_i})\); and (iv) a shared agent-attention network \(\bm{att}^\mu(x_1,..,x_N|\theta^{att})\) that models interactions among agent embeddings (Line~2). Their corresponding target networks are initialised by copying the parameters from the online networks (Line~3). A centralised replay buffer \(R\) is also initialised to store transition experiences (Line~4).
For each episode, the environment is reset to initial state \(s[0]\) (Lines~6). At each time step \(t\), each agent receives its local observation \(o_i[t]\) from the observation function \(P_o^i[s[t]]\) and generates its individual preference vector \(\bm{w}_i[t]\) using its preference generator \(PG_i\), which is stored in \(\bm{W}[t]\) (Lines~8--11).

Each agent then selects an action with the policy \(\mu_i(o_i[t])\) incorporated with the adjustable noise $\mathcal{N}$ (Lines~12--14). After execution, the agents jointly observe a vectorised reward \(\bm{r}[t]\) and the next global state \(s[t+1]\). The tuple \((s[t],\bm{a}[t],\bm{r}[t],s[t+1],\bm{W}[t])\) is stored in buffer \(R\) (Lines~15--16).

If the update condition is triggered, a batch of \(N_\tau\) transitions is sampled from the buffer (Line~18). For each sampled transition, every agent \(i\) computes its attention-based Q-value \({Q}_i^{att}\) and corresponding target value \(y_i^{att}\) using the embedding network \({emb}_i\), output network \({out}_i\), shared attention layer \(att\), and the target counterparts of each (Line~20-22). The loss is then computed based on a Multi-Objective Temporal Difference Error (MOTDE) using the attention-based Q-values (Line~23).

\small
\begin{algorithm}[H]
    \caption{Global-preference-based MAMORL}
    \label{alg:algorithm-label}
    \begin{algorithmic}[1]
\STATE\textbf{Input:}$\bm{\Psi}_{1..N}$: the preference distribution set for all agents; $\lambda$: the soft update parameter; $PG_{1..N}[o_{1..N}]$: the preference generator for each agent; 
\STATE Initialise each agent's critic network $\bm{Q}_i^\mu(s,a_1,...,a_N,\bm{W}|\theta^{\bm{Q}_i})$ and actor network $\mu_i(o_i,\bm{W}|\theta^{\mu_i})$ with parameters $\theta^{\bm{Q}_i}$ and $\theta^{\mu_i}$;
\STATE  Initialise target critic network $\bm{Q}^{\mu^\prime}_i$ and target actor network $\mu'_i$ with parameters $\theta^{\bm{Q'}_i} \leftarrow \theta^{\bm{Q}_i}$, $\theta^
{\mu'_i} \leftarrow \theta^{\mu_i}$;
\STATE   Initialise replay buffer $R$;
    \FOR{$episode = 1,...,M$}
    \STATE  Initialise the state $s[0]$;
      \FOR{$t = 0,...,T$}   
        \FOR{agent $i=1,...,N$}   
        \STATE Achieve the obersvation $o_i[t]$ through $P_o^i[s[t]]$;
        \STATE Achieve the preference $\mathbf{w}_i[t]$ through $PG_i[o_i[t]]$ and store it in global preference $\mathbf{W}[t]$;
        \ENDFOR
        \FOR{agent $i=1,...,N$}   
     \STATE   Select and execute action: $a_i[t] =  \mu_i(o_i,\mathbf{W}) + \mathcal{N}$
    \normalsize
    \ENDFOR
     \STATE   Observe the vectorised reward $\bm{r}[t]$ and new state $s[t+1]$;
      \STATE  Store the transition $(s[t],\bm{a}[t],\bm{r}[t],s[t+1],\bm{W}[t])$ in $R$ ;
        \IF{$update$}
      \STATE    Sample $N_{\tau}$ transitions $\thicksim R$;
          \FOR{each experience $(s[k],\bm{a}[k],\bm{r}[k],s[k+1],\bm{W}[k])$ in $N_{\tau}$}
          \FOR{agent $i=1,...,N$}
          \STATE Update $\theta^{\bm{Q}_i}$ by minimising MOTDE $L(\theta^{\bm{Q}_i})$ ;
        \STATE         Update $\theta^{\mu_i}$
                    by descending its MAMODDPG;
                    \ENDFOR
        \ENDFOR
        \FOR {agent $i=1,...,N$}
         \STATE            $ \theta^{Q_i^{\prime}} \leftarrow \lambda \theta^{Q_i}+(1-\lambda) \theta^{Q_i^{\prime}}$ ;

         \STATE            $\theta^{\mu_i^{\prime}} \leftarrow \lambda \theta^{\mu_i}+(1-\lambda) \theta^{\mu_i^{\prime}}$;
         \ENDFOR
        \STATE $\mathcal{N}$ is reduced;
        \ENDIF
        \ENDFOR
        \ENDFOR
    \end{algorithmic}
\end{algorithm}

\begin{algorithm}[H]
    \caption{Agent Attention MAMORL}
    \label{alg:algorithm-label}
    \begin{algorithmic}[1]
\STATE\textbf{Input:}$\bm{\Psi}_{1..N}$: the preference distribution set for all agents; $\lambda$: the soft update parameter; $PG_{1..N}[o_{1..N}]$: the preference generator for each agent; 
\STATE Initialise each agent's embedding network ${emb}_i^\mu(o_i,a_i,\bm{w}_i|\theta^{{emb}_i})$, output network ${out}_i^\mu(\tilde{h_i^S}|\theta^{{out}_i})$, actor network $\mu_i(o_i|\theta^{\mu_i})$ with parameters $\theta^{{emb}_i}$, $\theta^{{out}_i}$ and $\theta^{\mu_i}$, and shared agent-attention network $\bm{att}^\mu(x_1,..,x_N|\theta^{att})$
\STATE  Initialise each agent's target embedding network ${emb}_i^{'\mu}(o_i,a_i,\bm{w}_i|\theta^{{emb'}_i})$, target output network ${out}_i^{'\mu}(\tilde{h_i^S}|\theta^{{out'}_i})$, target actor network $\mu'_i(o_i|\theta^{\mu'_i})$, and shared target agent-attention network $\bm{att}^{'\mu}(x_1,..,x_N|\theta^{att'})$
\STATE   Initialise replay buffer $R$;
    \FOR{$episode = 1,...,M$}
    \STATE  Initialise the state $s[0]$;
      \FOR{$t = 0,...,T$}   
        \FOR{agent $i=1,...,N$}   
        \STATE Achieve the obersvation $o_i[t]$ through $P_o^i[s[t]]$;
        \STATE Achieve the preference $\mathbf{w}_i[t]$ through $PG_i[o_i[t]]$ and store it in global preference $\mathbf{W}[t]$;
        \ENDFOR
        \FOR{agent $i=1,...,N$}   
     \STATE   Select and execute action: $a_i[t] = \mu_i(o_i)+\mathcal{N}$
    \ENDFOR
     \STATE   Observe the vectorised reward $\bm{r}[t]$ and new state $s[t+1]$;
      \STATE  Store the transition $(s[t],\bm{a}[t],\bm{r}[t],s[t+1],\bm{W}[t])$ in $R$ ;
        \IF{$update$}
      \STATE    Sample $N_{\tau}$ transitions $\thicksim R$;
          \FOR{each experience $(s[k],\bm{a}[k],\bm{r}[k],s[k+1],\bm{W}[k])$ in $N_{\tau}$}
          \FOR{agent $i=1,...,N$}
          \STATE use ${emb}_i$,${out}_i$, shared $att$ ,and corresponding target networks to achieve $Q_i^{att}$ and $y_i^{att}$
                    \ENDFOR
          \STATE Calculate attention-based MOTDE $L^{att}(\theta^{\bm{Q}_i})$
          \FOR{agent $i=1,...,N$}
          \STATE Update $\theta^{{emb}_i}$ and $\theta^{{out}_i}$ by minimising $L^{att}(\theta^{\bm{Q}_i})$;
        \STATE       Update $\theta^{\mu_i}$
                    by descending its MAMODDPG;
                    \ENDFOR
        \STATE Update $\theta^{{att}}$ by minimising $L^{att}(\theta^{{att}})$;         
        \ENDFOR
        \FOR {agent $i=1,...,N$}
         \STATE    update $i$'s $\theta^{emb'_i}$, $\theta^{out'_i}$, and $\theta^{\mu'_i}$;
         \ENDFOR
         \STATE    update $\theta^{att'}$;
        \STATE $\mathcal{N}$ is reduced
        \ENDIF
        \ENDFOR
        \ENDFOR
    \end{algorithmic}
\end{algorithm}
    \normalsize

Each agent updates the parameters of \({emb}_i\) and \({out}_i\) by minimising the attention-based loss \(L^{att}(\theta^{\bm{Q}_i})\) (Line~25), and updates its actor network \(\mu_i\) by descending the gradient of its MAMODDPG loss (Line~26). The shared attention network \(att\) is also updated by minimising the attention loss (Line~28). 

Subsequently, target networks \({emb'}_i\), \({out'}_i\), \({\mu'}_i\), and the shared target attention layer \({att'}\) are updated using soft updates or parameter copies (Lines~30--32). Finally, the noise $\mathcal{N}$ is reduced as the training progresses (Lines~34). This process continues across episodes until convergence.

\normalsize
\subsection*{Appendix C: Environment Introduction}
\subsubsection*{Introduction on MOMA particle environments:}
The experiments were conducted partly in MAMO environments, which were extended from the grounded particle environment \cite{lowe2017multi} into an MO version. This environment consists of $N$ agents and $L$ landmarks in a two-dimensional world with continuous space and discrete time. In each discrete time slot, each agent moves according to the applied force. Meanwhile, agents can communicate with each other. The energy consumption related to movement is calculated based on the applied force and the distance moved, while the energy consumption for communication is determined by the packet length and the energy consumption per bit. 
The environments are briefly described as follows:

•  \textbf{Cooperative Navigation (\textit{Spread})}: Multiple agents cooperate to cover all landmarks while avoiding collision. The first objective is to minimise their respective distances to the landmarks, while the second objective is to minimise the total energy consumption of all agents.

•  \textbf{\textit{Reference}}: All agents communicate with each other about the correct landmark to navigate towards. The first objective is to minimise their respective distances to their own landmarks, and the second objective is to minimise their total energy consumption by communication and movement.

•  \textbf{Keep-away (\textit{Push})}: The cooperative agents aim to reach the target landmark. Their first objective is to minimise the smallest distance of any agent to the correct landmark. Adversarial agents attempt to block them without knowing the correct landmark, and their first objective is to maximise the smallest distance between cooperative agents to the correct landmark. The second objective of each agent is to minimise its own energy consumption.

•  \textbf{Predator-prey (\textit{Tag})}: Slower cooperating agents chase a faster adversary in an obstacle-filled environment. Cooperative agents receive a reward for catching the adversary, while the adversary is punished for their first objectives. The second objective of each agent is to minimise its own energy consumption.

•  \textbf{Physical Deception (\textit{Adversary})}: Cooperative agents spread out across landmarks to deceive an adversarial agent unaware of the correct target landmark. The distance of the adversarial agent to the correct landmark serves as a punishment for cooperative agents and a reward for the adversary for their first objective. The second objective of each agent is to minimise its own energy consumption.

\subsubsection{Introduction on MAMOLand:}
MOMAland is an open source Python library for developing and comparing multi-objective multi-agent reinforcement learning algorithms by providing a standard API to communicate between learning algorithms and environments, as well as a standard set of environments compliant with that API. Essentially, the environments follow the standard PettingZoo APIs, but return vectorised rewards as numpy arrays instead of scalar values \cite{felten2024momaland}.

•  \textbf{Mountain Walker}: In this environment, multiple walker agents aim to carry a package to the right side of the screen without falling. This environment also supports continuous observations and actions. The multi-objective version of this environment includes an additional objective to keep the package as steady as possible while moving it. Naturally, achieving higher speed entails greater shaking of the package, resulting in conflicting objectives. The number of agents is configurable.

•  \textbf{Surround}: Each agent perceives its own 3D coordinates, those of its teammates, and the position of a shared target. The action space consists of discrete 3D motion vectors, and episodes terminate upon collisions, floor contact, or target capture. Agents aim to establish a stable formation around a fixed target point. The first objective is minimising the distance to the target using potential-based shaping. The second one is maximising the separation from teammates to avoid collisions.

•  \textbf{Catch}: In Catch, the target exhibits adversarial intelligence. It moves away from the centroid of the swarm if agents get too close, or randomly otherwise. The same two objectives apply, with the added complexity of an evasive target behavior that requires predictive coordination.

•  \textbf{Escort}: Escort extends Surround by introducing a linearly moving target from an initial to a final position across a fixed time horizon. Agents must maintain a stable formation while tracking the moving target under the same dual objectives.

\subsection*{Appendix D: Implementation Details}

Our multi-agent multi-objective reinforcement learning framework combines MADDPG with attention mechanisms and preference learning for cooperative environments. The implementation consists of several key components detailed below.

\subsubsection*{Network Architecture:}

\textbf{Actor Network.} Each agent's actor network takes individual observations and preference vectors as input. The architecture follows: $\text{obs\_dim} \rightarrow 128 \rightarrow 256 \rightarrow \text{action\_dim}$ with ReLU activations, LayerNorm, and Tanh output activation for continuous actions.

\textbf{Critic Network.} The critic network employs a modular design with three components:
\begin{itemize}
    \item \textit{Agent Embedding Layer}: Encodes global state, concatenated actions, and preferences into 128-dimensional embeddings. Input dimension: $\text{state\_dim} + \sum \text{action\_dims} + \sum \text{preference\_dims}$.
    \item \textit{Central Attention Layer}: Multi-head self-attention mechanism with 8 heads, embedding dimension 128, and feed-forward network $(128 \rightarrow 256 \rightarrow 128 \rightarrow 128)$ with LayerNorm and residual connections.
    \item \textit{Output Layer}: Produces Q-values for each reward dimension following $(128 \rightarrow 512 \rightarrow 256 \rightarrow \text{reward\_dim})$.
\end{itemize}

\subsubsection*{Training Configuration:}

Key hyperparameters include: batch size 128, buffer size $5 \times 10^5$, discount factor $\gamma = 0.99$, soft update rate $\tau = 0.005$, learning rates (actor: $5 \times 10^{-4}$, critic: $3 \times 10^{-4}$), and 32 candidate actions for GPI.
We use Pytorch to implement all the deep learning models on our NVIDIA GeForce RTX 5090.

\end{document}